\newtheorem{thm}{Theorem}[]
\newtheorem{cor}{Corollary}[]
\theoremstyle{remark}
\begin{document}
%\vspace{-2 mm}
\title{Rate Maximization for Half-Duplex Multiple Access with Cooperating Transmitters}
%\vspace{-8 mm}
\author{\IEEEauthorblockN{Ahmad Abu Al Haija,~\IEEEmembership{Student Member,~IEEE,} and Mai Vu,~\IEEEmembership{Member,~IEEE}}
\vspace{-5 mm}
\thanks{This work has been supported in part by grants from the Natural Science and Engineering Research Council of Canada (NSERC) and the Fonds Quebecois de la
Recherche sur la Nature et les Technologies (FQRNT). A part of this work has been presented at the IEEE International Conference on Communications (ICC) $2011$.

Ahmad Abu Al Haija is with the Department of Electrical and Computer Engineering, McGill University, Montreal, Canada
(e-mail: ahmad.abualhaija@mail.mcgill.ca). He is currently visiting Tufts University.
Mai Vu is with the Department of Electrical and Computer Engineering, Tufts University, Medford, MA, USA
(e-mail: maivu@ece.tufts.edu).}}
\maketitle
\begin{abstract}
%\vspace{-4 mm}
%\normalsize{
 We derive the optimal resource allocation of a practical half-duplex scheme for the Gaussian multiple access channel with transmitter cooperation (MAC-TC). Based on rate splitting and superposition coding, two users transmit information to a destination over
 $3$ phases, such that the users partially exchange their information during the first $2$ phases and cooperatively transmit
 to the destination during the last one. This scheme is near capacity-achieving when the inter-user links are stronger than each user-destination link; it also includes partial decode-forward relaying as a special case. We propose efficient algorithms to find the optimal resource allocation for
 maximizing either the individual or the sum rate and identify the corresponding optimal scheme for each channel configuration. For fixed phase durations, the power allocation problem is convex and can be solved analytically based on the KKT conditions. The optimal phase durations can then be obtained numerically using simple search methods.  Results show that as the inter-user link qualities increase, the optimal scheme moves from no cooperation to partial then to full cooperation, in which the users fully exchange their information and cooperatively send it to the
 destination. Therefore, in practical systems with strong inter-user links, simple decode-forward relaying at both users is rate-optimal.
\end{abstract}
%\vspace{-2 mm}
%%%%%%%%%%%%%%%%%%%%%%5
%%%%%%%%%%%%%%%%%%%%%%%%%%%%%%%%%%%%%%%%%%%%%%%%%%%%%%%%%%%%%%%%%%%%%%%%%%%%%%%%%%%%%%%%%%%%%%%%%%%%%%%%%%%%%
\section{Introduction}\label{sec:intro}
%\vspace{-3 mm}
\IEEEPARstart{C}{ooperation} among nodes in wired or wireless networks can significantly improve network throughput and reliability \cite{Erkip, will1, gama}. A large number of cooperative schemes have been proposed for fundamental networks such as the relay channel and the multiple access channel with transmitter cooperation (MAC-TC) \cite{will1, gama, Erkip, HoMa, HoMa2, haivu3}. The MAC-TC is particularly interesting as it includes the classical MAC and also the relay channel as special cases. Furthermore, it has immediate applications in cellular and ad hoc networks. For example, in the uplink of a cellular system, two mobiles can cooperate to send their information to the base station. This cooperation leads to a larger rate region and smaller outage probability as shown in \cite{Erkip}. In ad hoc or sensor networks,  two or more  nodes with good inter-node link qualities can also cooperate to send their information to a common destination.

%and interference channel with transmitter or receiver cooperation in both half and full-duplex modes \cite{gama, Erkip, Lan1, HoMa, HoMa2, haija52,
%vis1, Tse1}.
An important question from the practical implementation perspective is the optimal resource allocation that achieves the maximum performance
in a channel. Optimal resource allocation has been studied quite extensively for non-cooperative channels.
%For example, optimal power allocation that achieves the capacity of the channel with state is proposed in \cite{DPC}.
In \cite{itwt}, for example, iterative water-filling is proposed to maximize the sum capacity of the Gaussian MIMO MAC. Optimal
power allocations for minimizing the outage probability  of the fading MAC and broadcast channel are derived in \cite{gold1} and \cite{FBC}, respectively.
These works
are based on formulating the problems as a convex optimization and then solving the KKT conditions.% of that problems.

Recently, attention has been focused on resource allocation for cooperative communications. For the half-duplex relay channel, power allocations for maximizing achievable rates of various schemes and an upper bound are derived in \cite{HoMa, HoMa2}, and for minimizing the power consumption over AWGN channels in \cite{Fanny1}. For the full-duplex relay channel, power allocation for minimizing the outage probability using multi-hop or decode-forward relaying is derived in \cite{alou1}. For the full-duplex MAC-TC with the generalized feedback scheme analyzed in \cite{Erkip}, the optimal power allocation is derived in \cite{fhmac}. Moreover, reference \cite{fhmac} also puts forward a new half-duplex scheme with its optimal power allocation. This half-duplex scheme, however,
is sub-optimal as it is not based on sound information-theoretic analysis.

In \cite{haivu3}, we have proposed a near capacity-achieving half-duplex cooperative scheme for the MAC-TC consisting of two users communicating with a destination. The scheme is based on rate splitting, superposition coding and partial decode-forward relaying. The transmission occurs over independent blocks, each of which is divided into three phases. During the first two phases, the two users exchange part of their information; then during the last phase, they cooperatively transmit their information to the destination. This scheme is near capacity-achieving as shown in \cite{haivu3}, especially when the inter-user link qualities are higher than the link quality
between each user and the destination. Moreover, it includes as a special case the half-duplex decode-forward scheme for relay channels as proposed
in \cite{HoMa, HoMa2}.

In this paper, we derive the optimal power allocation and phase durations that maximize the individual or sum rate of this scheme applied in the Gaussian channel with Gaussian signaling. This optimization generalizes our previous result in \cite{haija1} for the symmetric channel to the general asymmetric case. Since the considered problem for either individual or sum rate maximization is convex only with fixed phase durations, we decompose the problem into $2$ steps \cite{dec2006}. First, we fix the phase durations and derive the optimal power allocation by analyzing the KKT conditions of the obtained convex problem. Then, we find the optimal phase durations numerically. Depending on the link qualities between two users and the destination, we analyze different cases resulting from the KKT conditions, derive the power allocation and obtain the corresponding optimal  scheme for each case.

Analysis and numerical results show that a user chooses to cooperate with the other if the link to this user is stronger than to the destination, but abstains from cooperation if this link is weaker. As the inter-user link quality increases, the
amount of information sent cooperatively
%via the other user
also increases, such that the scheme will transverse from partial to full cooperation.
 %Asymptotically, the maximum gain for the individual and sum rate in the symmetric case is $2$ and $1$ bps/Hz, respectively.
 Using optimization results, we
 further characterize the optimal scheme for each geometric location of the destination on a $2D$ plane with respect to the locations of the two users where the channel gain between any two nodes is related to the distance
by a pathloss-only model. We present the geometric optimal regions of each scheme. These analyses can be useful for network planners in obtaining
the optimal performance.
%%%%%%%%%%%%%%%%%%%%%%%%%%%%%%%%%%%%%%%%%%%%%%%%%%%%%%%%%%%%%%%%%%%%%%%%%%%%%%%%%%%%%%%%%%%%%%%%%%%%%%%%%%%%%%%%%%%%%%%%%%%%%%%%%%%%%%%%%%%
%%%%%%%%%%%%%%%%%%%%%%%%%%%%%%%%%%%%%%%%%%%%%%%%%%%%%%%%%%%%%%%%%%%%%%%%%%%%%%%%%%%%%%%%%%%%%%%%%%%%%%%%%%%%%%%%%%%%%%%%%%%%%%%%%%%%%%%%%%%
%\vspace{-2 mm}
\section{Channel Model}\label{sec:system_model}
%Figure \ref{fig:system_model} illustrates the channel model for the half-duplex
The Gaussian MAC with transmitter cooperation (MAC-TC)
consists of two users and one destination. The communication links among these
terminals are characterized by complex-valued channel gains and additive while Gaussian noise (AWGN).  The mathematical formulation of this channel
can be expressed as \cite{Erkip}
\begin{align}\label{chan1}
Y_{12}&=h_{12}X_1+Z_1,\;\;\nonumber\\
Y_{21}&=h_{21}X_2+Z_2,\;\;\nonumber\\
Y_{3}&=h_{10}X_1+h_{20}X_2+Z_3,
\end{align}
\noindent where $X_1$ and $X_2$ are signals transmitted from user $1$ and user $2$, respectively; $Y_{21},$ $Y_{12},$ and $Y_3$
are the signals received by user $1$, user $2$, and the destination, respectively; $Z_1, Z_2,$ and $Z_3$ are i.i.d complex Gaussian noises with zero mean and unit variance; $h_{12},$ and $h_{21}$ are
inter-user link coefficients, whereas $h_{10}$ and $h_{20}$ are the
link coefficients between the users and destination. Each link coefficient is a complex value $h_{ij}=g_{ij}e^{\sqrt{-1}\theta_{ij}}$ where $g_{ij}$ is the real amplitude gain and  $\theta_{ij}$ is the phase.
We assume that each user knows all link amplitudes and phases of all links to the
destination. This information can be obtained through feedback from the destination or channel reciprocity
as discussed in \cite{Erkip}. We also assume that  each receiver can compensate for the phases perfectly.
%and $Z_{0i}\sim N(0,N_0),\; i=1,2,3$.

The half-duplex constraint is satisfied
by using time division consisting of $3$ phases as shown in Figure \ref{fig:Gausmod}.
The channel model in each phase  can be expressed as
\noindent
\begin{align}\label{recsig}
\text{phase}\;1:&\;\;\;Y_{12}=h_{12}X_{11}+Z_1,\;\;Y_{1}=h_{10}X_{11}+Z_{31},\\
\text{phase}\;2:&\;\;\;Y_{21}=h_{21}X_{22}+Z_2,\;\;Y_{2}=h_{20}X_{22}+Z_{32},\nonumber\\
\text{phase}\;3:&\;\;\;\;Y_{3}=h_{10}X_{13}+h_{20}X_{23}+Z_{33},\nonumber
\end{align}
%%\vspace{-3 mm}
\noindent where $Y_{ij},\;(i,j)\in\{1,2\}$, is the signal received by the $j^{\text{th}}$ user
during the $i^\text{th}$ phase;  $Y_k,\;k\in\{1,2,3\}$ is the signal received by the destination during the $k^\text{th}$ phase; and all the $Z_l,\;l\in\{1,2,31,32,33\}$, are i.i.d complex Gaussian noises with zero mean and unit variance. $X_{11}$ and $X_{13}$ are the signals transmitted from
 user $1$ during the $1^{\text{st}}$ and $3^{\text{rd}}$ phases, respectively. Similarly, $X_{22}$ and $X_{23}$ are the signals transmitted from user $2$
 during the $2^{\text{nd}}$ and $3^{\text{rd}}$ phases.
\begin{figure}[t]
    \begin{center}
    \includegraphics[width=90mm]{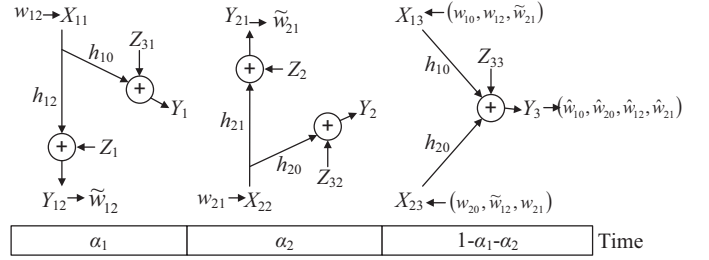}%{FIG1J2.eps}%{figblk.eps}%{fig2jour22.eps}
    \caption{Half-duplex coding scheme for the Gaussian MAC-TC.} \label{fig:Gausmod}
    \end{center}
\vspace*{-6mm}
\end{figure}
%\noindent
%\begin{figure}[]
%    \begin{center}
%    \includegraphics[width=0.3\textwidth]{figdndjj2.eps}
%    %\includegraphics[width=62mm]{fig1aker.eps}
%    \caption{The multiple access channel with cooperation.} \label{fig:system_model}
%    \end{center}
%%\vspace*{-8mm}
%\end{figure}
%%%%%%%%%%%%%%%%%%%%%%%%%%%%%%%%%%%%%%%%%%%%%%%%%%%%%%%%%%%%%%%%%%%%%%%%%%%%%%%%%%%%%%%%%%%%%%%%%%%%%%%%%%%%%%%%
%%%%%%%%%%%%%%%%%%%%%%%%%%%%%%%%%%%%%%%%%%%%%%%%%%%%%%%%%%%%%%%%%%%%%%%%%%%%%%%%%%%%%%%%%%%%%%%%%%%%%%%%%%%%%%%
%\vspace{-4 mm}
\section{Cooperative Scheme and Rate Region}\label{sec:copshrr}
In this section, we  describe the proposed half-duplex scheme for the MAC-TC. We then show its achievable rate region and analyze the
special case of partial decode-forward relaying.
\vspace{-4 mm}
\subsection{Cooperative Scheme}
 We use a block coding scheme in which every block has the same length and is coded independently. Using time division, each block is divided into $3$ phases with durations  $\alpha_1,$ $\alpha_2$ and $\alpha_3=1-\alpha_1-\alpha_2$. Let
$w_1$ and $w_2$ be the messages to be sent during a specific
block by user $1$ and user $2$, respectively. Each user splits its message into two
parts: $w_1=(w_{10}, w_{12})$ and $w_2=(w_{20}, w_{21})$, where $w_{12}$ and $w_{21}$ are cooperative message parts to be decoded by the other user and the destination, $w_{10}$ and $w_{20}$ are private message parts to be decoded by the destination only. During the
$1^\text{st}$ phase, user $1$ sends $w_{12}$ and
 user $2$ decodes it. Similarly, during the
$2^\text{nd}$ phase, user $2$ sends $w_{21}$ and
the user $1$ decodes it. Then, during the $3^\text{rd}$ phase, user $1$ sends both cooperative message parts and its own private part as $(w_{10},w_{12},w_{21})$. Similarly,
user $2$ sends $(w_{20},w_{21},w_{12})$. The users encode these messages by superposition coding.
At the end of the $3^\text{rd}$ phase, the destination utilizes signals received in all three phases to
jointly decode both messages $(w_1,w_2)$. Figure \ref{fig:Gausmod} illustrates this proposed scheme.
%%%%%%%%%%%%%%%%\cite{nakagami1960mdg}.
%%%%%%%%%%%%%%%%%%%%%%%%%%%%%%%%%%%%%%
\subsubsection{Signaling}

User $1$ first generates a codeword for its cooperative message part in the $1^{\text{st}}$ phase and constructs the signal $X_{11}$. It then superimposes its private message part on both cooperative message parts and constructs the signal $X_{13}$. Similarly for user $2$. Therefore, both users construct their transmit signals as
\noindent
\begin{align}\label{sigtr}
X_{11}&=\sqrt{\rho_{11}}U_{1}(w_{12}),\;\;X_{22}=\sqrt{\rho_{22}}U_{2}(w_{21})\nonumber\\
%\end{align}
%\begin{align}\label{sigtr}
X_{13}&=\sqrt{\rho_{10}}V_{1}(w_{10})+\sqrt{\rho_{13}}U_3(w_{12},w_{21}),\nonumber\\
X_{23}&=\sqrt{\rho_{20}}V_{2}(w_{20})+\sqrt{\rho_{23}}U_3(w_{12},w_{21})
\end{align}
\noindent where $U_{1},U_{2},V_{1},V_{2}$ and $U_3$ are independent and identically distributed Gaussian codewords with entries that have zero mean and unit variance. Here, $\rho_{11},\rho_{22},\rho_{10}$ and $\rho_{20}$ are the transmission powers allocated for codewords $U_{1},U_{2},V_{1}$ and $V_{2}$, respectively,
$\rho_{13}$ and $\rho_{23}$ are the transmission powers allocated for codeword $U_3$ by user $1$ and user $2$, respectively. These power allocations satisfy the following
power constraints:
%\vspace{-2 mm}
\noindent
\begin{align}\label{powcsch2}
\alpha_1\rho_{11}+\alpha_3(\rho_{10}+\rho_{13})&=P_1,\nonumber\\
\alpha_2\rho_{22}+\alpha_3(\rho_{20}+\rho_{23})&=P_2.
\end{align}
Then, in the $1^{\text{st}}$ phase, user $1$ sends $w_{12}$ at rate $R_{12}$  by transmitting the signal $X_{11}$ and user $2$ decodes $w_{12}$.
%User $2$ can decode $w_{12}$ reliably if
%\begin{align}\label{spr12}
%R_{12}\leq&\; \alpha_1{\cal C}\!\left(g_{12}^2\rho_{11}\right)=A_1,
%\end{align}
%where ${\cal C}(x)=\log(1+x)$.
Similarly, in the $2^{\text{nd}}$ phase, user $2$ sends $w_{21}$ at  rate $R_{21}$  by transmitting the signal $X_{22}$ and user $1$ decodes $w_{21}$.
%User $1$ can decode $w_{21}$ reliably if
%\begin{align}\label{spr21}
%R_{21}\leq&\; \alpha_2{\cal C}\!\left(g_{21}^2\rho_{22}\right)=A_2,
%\end{align}
Finally, during the last phase, user $1$ sends $(w_{10},w_{12},w_{21})$ at  rate triplet $(R_{10},R_{12},R_{21})$ by transmitting $X_{13}$. Similarly for user $2$. Since both users
know $w_{12}$ and $w_{21}$ in this phase, they can perform coherent transmission of these cooperative message parts by transmit beamforming such that the achievable
rates of both users are increased.
%%%%%%%%%%%%%%%%%%%%%%%%%%%%%%%%%%%%%%%%%%%%%%%%%%%%%%%%%%%%%%%
\subsubsection{Decoding}
\subsubsection*{At each user}
In the $1^{\text{st}}$ phase, user $2$ decodes $w_{12}$ from $Y_{12}$ using maximum likelihood (ML) decoding. User $2$ can reliably decode $w_{12}$  if
\begin{align}\label{spr12}
R_{12}\leq&\; \alpha_1\log\left(1+g_{12}^2\rho_{11}\right)=I_1.
\end{align}
Similarly, in the $2^{\text{nd}}$ phase, user $1$ applies ML decoding and can reliably decode $w_{21}$  if
\begin{align}\label{spr21}
R_{21}\leq&\; \alpha_2\log\!\left(1+g_{21}^2\rho_{22}\right)=I_2.
\end{align}
\subsubsection*{At the destination}
The destination utilizes the received signals in all three phases $(Y_1,Y_2,Y_3)$ to
jointly decode all message parts $(w_{10},w_{20},w_{12},w_{21})$ using joint ML decoding. Specifically, it looks for a  message vector
$(\hat{w}_{10},\hat{w}_{20},\hat{w}_{12},\hat{w}_{21})$ such that the conditional probability of the received signal vector $(Y_1, Y_2, Y_3)$ as in (\ref{recsig}) given
%\begin{align*}
%p(y_1,y_2,y_3|X_{11}(w_{12}))
%\end{align*}
the codewords in (\ref{sigtr}) are the highest among all other message vectors. The destination can reliably decode this message vector if
\begin{align}\label{sprd}
R_{10}\leq&\; \alpha_3\log\left(1+g_{10}^2\rho_{10}\right)=I_3\\
R_{20}\leq&\; \alpha_3\log\left(1+g_{20}^2\rho_{20}\right)=I_4\nonumber\\
%\end{align}
%\begin{align}
R_{10}+R_{20}\leq&\;\alpha_3\log\left(1+g_{10}^2\rho_{10}+g_{20}^2\rho_{20}\right)=I_5\nonumber\\
R_1+R_{20}\leq&\; \alpha_1\log\left(1+g_{10}^2\rho_{11}\right)+\alpha_3\log(1+
\zeta)=I_6\nonumber\\
R_{10}+R_2\leq&\;\alpha_2\log\left(1+g_{20}^2\rho_{22}\right)
+\alpha_3\log(1+\zeta)=I_7\nonumber\\
R_1+R_2\leq&\; \alpha_1\log\left(1+g_{10}^2\rho_{11}\right)+\alpha_2\log\left(1+g_{20}^2\rho_{22}\right)\nonumber\\
&+\alpha_3\log(1+\zeta)=I_8.\nonumber
\end{align}
where
\begin{align}\label{jowz}
\!\!\!\zeta&=g_{10}^2(\rho_{10}+\rho_{13})+g_{20}^2(\rho_{20}+\rho_{23})+2g_{10}g_{20}\sqrt{\rho_{13}\rho_{23}}
\end{align}
These constraints are obtained as follows: $I_j$ is the maximum reliable transmission rate for a message $w_{d}$ such that the destination can decode $w_{d}$ reliably  if it has decoded some messages  $w_{c}$ correctly, where for
  \begin{itemize}
    \item $j=3$, $w_d=w_{10}$ and $w_c=(w_{20},w_{12},w_{21})$
    \item $j=4$, $w_d=w_{20}$ and $w_c=(w_{10},w_{12},w_{21})$
    \item $j=5$, $w_d=(w_{10},w_{20})$ and $w_c=(w_{12},w_{21})$
    \item $j=6$, $w_d=(w_{10},w_{20},w_{12})$ and $w_c=w_{21}$
    \item $j=7$, $w_d=(w_{10},w_{20},w_{21})$ and $w_c=w_{12}$
    \item $j=8$, $w_d=(w_{10},w_{20},w_{12},w_{21})$ and $w_c=\emptyset$.
  \end{itemize}
  The decoding for $w_{d}$ given $w_c$ can be implemented as maximum likelihood decoding jointly among all messages in $w_d$.
%\begin{itemize}
%  \item $I_3$ is the maximum reliable transmission rate for $w_{10}$ such that the destination can decode $w_{10}$ reliably  if it receives $w_{20},w_{12}$ and
%$w_{21}$ correctly. Similar holds for $I_4$ and $w_{20}$.
%  \item $I_5$ is the maximum reliable transmission rate for $w_{10}$ and $w_{20}$ such that the destination can decode $w_{10}$ and $w_{20}$ reliably if it receives
%$w_{12}$ and $w_{21}$ correctly.
%  \item $I_6$ is the maximum reliable transmission rate for $w_{12},w_{10}$ and $w_{20}$ such that the destination can decode $w_{12},w_{10}$ and $w_{20}$ reliably
%if it receives $w_{21}$ correctly.
%  \item $I_7$ is the maximum reliable transmission rate for $w_{10},w_{20}$ and $w_{21}$ such that the destination can decode $w_{10},w_{20}$ and $w_{21}$ reliably
%if it receives $w_{12}$ correctly.
%  \item $I_8$ is the maximum reliable transmission rate for $w_{12},w_{10},w_{20}$ and $w_{21}$ such that the destination can decode $w_{10},w_{20}$ and $w_{21}$
%reliably.
%\end{itemize}
Note that the terms $I_6, I_7$ and $I_8$ show the advantage of beamforming resulted from coherent transmission of $(w_{12},w_{21})$ from both users in the
$3^{\text{rd}}$ phase.
% transmission of $(w_{12},w_{21})$ from both users in the $3^{\text{rd}}$ phase.
%The first three constraints ensure the reliable decoding for $w_{10}$ and $w_{20}$ when $w_{12}$ and $w_{21}$ are received correctly. The fourth (fifth) constraint
%ensures the
%reliable decoding for $w_{10}$, $w_{20}$ and $w_{12}$ $(w_{21})$ when $w_{21}$ $(w_{12})$ is received correctly. The sixth constraint ensures the
%reliable decoding for $w_{10}$, $w_{20}$, $w_{12}$ and $w_{21}$. The last term in the last three constraints shows the advantage of beamforming resulted from  coherent
% transmission of $(w_{12},w_{21})$ from both users in the $3^{\text{rd}}$ phase.
%\vspace{-2 mm}
%%%%%%%%%%%%From Figure \ref{fig:system_model} , the specific signaling for our scheme can be expressed as

\vspace{-3 mm}
%%%%%%%%%%%%%%%%%%%%%%%%%%%%%
\subsection{Achievable Rate Region}
The achievable rate region in terms of $R_1=R_{10}+R_{12}$ and $R_2=R_{20}+R_{21}$ is given in the following theorem:
\vspace{-2 mm}
\begin{thm}
The achievable rate region resulting from the proposed scheme consists of rate pairs $(R_1,R_2)$ satisfying the following constraints:
\begin{align}\label{th1Grrsimp}
R_1\leq&\; I_1+I_3\triangleq J_1\\
R_2\leq&\; I_2+I_4\triangleq J_2\nonumber\\
R_1+R_2\leq&\; I_1+I_2+I_5\triangleq S_1\nonumber\\
R_1+R_2\leq&\; I_2+I_6\triangleq S_2\nonumber\\
%\end{align}
%\begin{align}%\label{th1Grrsimp}
R_1+R_2\leq&\; I_1+I_7\triangleq S_3\nonumber\\
R_1+R_2\leq&\; I_8\triangleq S_4\nonumber
\end{align}
\noindent for some $\alpha_1\geq0,\;\alpha_2\geq 0,$ $\alpha_1+\alpha_2\leq1$ and power allocation set $(\rho_{10},\rho_{20},\rho_{11},\rho_{22},\rho_{13},\rho_{23})$ satisfying constraints in (\ref{powcsch2}).
\end{thm}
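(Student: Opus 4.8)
The statement is a projection result, so I would prove it by Fourier--Motzkin elimination. The decoding analysis has already produced the achievable constraints on the four message-part rates $(R_{10},R_{20},R_{12},R_{21})$: the inter-user bounds $R_{12}\le I_1$ and $R_{21}\le I_2$ from \eqref{spr12}--\eqref{spr21}, the six destination bounds in \eqref{sprd}, and the implicit nonnegativity $R_{10},R_{20},R_{12},R_{21}\ge 0$. The theorem merely re-expresses this polytope in the user rates $R_1=R_{10}+R_{12}$ and $R_2=R_{20}+R_{21}$, so the entire content of the proof is to eliminate the auxiliary split rates.

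The plan is to substitute $R_{12}=R_1-R_{10}$ and $R_{21}=R_2-R_{20}$, leaving $R_{10},R_{20}$ as the only free variables, with lower bounds $R_{10}\ge\max(0,R_1-I_1)$ and $R_{20}\ge\max(0,R_2-I_2)$, upper bounds $R_{10}\le\min(I_3,R_1,I_7-R_2)$ and $R_{20}\le\min(I_4,R_2,I_6-R_1)$, the coupling $R_{10}+R_{20}\le I_5$, and the detached bound $R_1+R_2\le I_8$. I would then eliminate $R_{20}$ and $R_{10}$ in turn, forming every pairing of a lower with an upper bound. The listed facets fall out directly: pairing the binding lower bounds $R_1-I_1$ and $R_2-I_2$ against $I_3$ and $I_4$ yields $R_1\le J_1$ and $R_2\le J_2$; pairing them against the coupling gives $R_1+R_2\le I_1+I_2+I_5=S_1$; pairing against $I_6-R_1$ and $I_7-R_2$ gives $R_1+R_2\le I_2+I_6=S_2$ and $R_1+R_2\le I_1+I_7=S_3$; and $R_1+R_2\le I_8=S_4$ is carried through unchanged.

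I expect the pruning of redundant inequalities to be the main obstacle, since the elimination also generates several extra bounds. The benign ones are the single-rate coupling bounds $R_1\le I_1+I_5$ and $R_2\le I_2+I_5$, obtained when the active lower bound is $0$; these I would discard using monotonicity relations that follow immediately from \eqref{sprd} and \eqref{jowz}. Because $\zeta\ge g_{10}^2\rho_{10}+g_{20}^2\rho_{20}$, the single logarithm in $I_5$ dominates those in $I_3$ and $I_4$, so $I_5\ge\max(I_3,I_4)$ and hence $I_1+I_5\ge J_1$, $I_2+I_5\ge J_2$; the two single-rate bounds are thus implied by $R_1\le J_1$ and $R_2\le J_2$ and may be removed.

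The genuinely delicate step, and where I anticipate the real work, is the corner bounds $R_1\le I_6$ and $R_2\le I_7$ that arise from pairing a zero lower bound against $I_6-R_1$ and $I_7-R_2$. These are not dominated by $R_1\le J_1$ and $R_2\le J_2$ in general: one has $I_8\ge\max(I_6,I_7)$ from the coherent-combining term, but $I_6$ need not exceed $J_1=I_1+I_3$ when the inter-user gain $g_{12}$ is much larger than the direct gain $g_{10}$, precisely the cooperative regime of interest. I would therefore focus on a case analysis over the sign of $R_1-I_1$ and $R_2-I_2$, showing that whenever such a corner bound becomes active (the regime in which the private-rate lower bound is $0$) one of the six retained facets binds at least as tightly; establishing this implication, or else identifying the exact channel conditions under which it holds, is the crux of the argument. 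Once it is settled, the projection collapses to exactly the six facets of \eqref{th1Grrsimp}, and the reverse inclusion---that every $(R_1,R_2)$ meeting the six constraints admits a feasible split---is automatic, since Fourier--Motzkin returns the exact projection of the feasible set.
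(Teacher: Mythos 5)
Your route is genuinely different from the paper's, and it is the more honest one. The paper's entire proof is the one-line observation that the six inequalities of (\ref{th1Grrsimp}) follow by \emph{adding} the split-rate constraints (\ref{spr12}), (\ref{spr21}) and (\ref{sprd}): $R_1=R_{10}+R_{12}\le I_3+I_1=J_1$, $R_1+R_2=(R_1+R_{20})+R_{21}\le I_6+I_2=S_2$, and so on. That argument proves only that every rate pair the scheme achieves (for fixed $\alpha_i,\rho_{ij}$) satisfies the six constraints, i.e.\ the region of (\ref{th1Grrsimp}) is an \emph{outer} estimate of the fixed-parameter scheme region. What the theorem literally asserts --- that every pair satisfying the six constraints is achievable --- is the converse containment, and your Fourier--Motzkin projection is exactly the tool that would establish it. Your elimination is executed correctly: the exact projection consists of the six listed facets, two dominated bounds $R_1\le I_1+I_5$, $R_2\le I_2+I_5$ (redundant because $I_5\ge\max(I_3,I_4)$, which follows from $\log(1+a+b)\ge\max\{\log(1+a),\log(1+b)\}$; the quantity $\zeta$ plays no role in $I_5$), and the two corner bounds $R_1\le I_6$, $R_2\le I_7$.

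The gap is that you leave the disposal of the corner bounds as ``the crux \ldots to be settled,'' and the resolution you envision --- that whenever a corner bound is active, one of the six retained facets binds at least as tightly \emph{for the same parameters} --- is false in general, so no case analysis on the signs of $R_1-I_1$, $R_2-I_2$ can close it. Concretely, take any allocation with $\alpha_1,\alpha_2>0$, all powers positive, $g_{20},g_{21}>0$, and let $g_{12}$ be large. Then $S_2=I_2+I_6>I_6$ and $S_4=I_6+\alpha_2\log\left(1+g_{20}^2\rho_{22}\right)>I_6$, while $J_1$, $S_1$, $S_3$ all contain $I_1$ and are arbitrarily large; hence the point $(R_1,R_2)=\left(\min(S_2,S_4),\,0\right)$ satisfies all six constraints of (\ref{th1Grrsimp}) yet violates $R_1\le I_6$, so it admits no feasible split: the destination's error event $w_d=(w_{10},w_{20},w_{12})$ forces $R_1+R_{20}\le I_6$ with $R_{20}\ge 0$. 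Your closing claim that ``the reverse inclusion is automatic, since Fourier--Motzkin returns the exact projection'' therefore overclaims: it would be automatic only after the corner facets are proved redundant, which at fixed parameters they are not. Any complete proof of the theorem as stated must exploit the union over resource allocations --- e.g.\ note that $R_1>I_6$ forces $R_2<I_2$ via $S_2$, then show that shifting user $2$'s phase-$2$ time and power into relaying (reducing $\alpha_2,\rho_{22}$, increasing $\rho_{23}$, hence increasing $\zeta$ and $I_6$) reaches an allocation under which the same $(R_1,R_2)$ satisfies all eight constraints --- or else the statement must be amended to include $R_1\le I_6$ and $R_2\le I_7$. Neither argument appears in the paper, whose one-line proof has the inclusion backwards; the omission happens to be harmless downstream (one checks $I_6+J_2\ge S_2$, $J_1+I_7\ge S_3$, $I_6+I_7\ge I_8$, and $I_6=S_4$ when $\alpha_2=\rho_{22}=0$, so the corner bounds alter neither the optimized sum rate nor the optimized individual rate), but it does alter the region for fixed parameters, which is what Theorem 1 claims to describe.
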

\begin{proof}
Combining (\ref{spr12}), (\ref{spr21}) and (\ref{sprd}) leads to (\ref{th1Grrsimp}).
\end{proof}
%\vspace{-3 mm}
%\noindent
%\begin{proof}
%Follows directly from \cite{haija52}.
%\end{proof}
%In (\ref{th1Grrsimp}), the terms $\alpha_2{\cal C}\!\left(g_{21}^2\rho_{22}\right)$ and $\alpha_1{\cal C}\!\left(g_{12}^2\rho_{11}\right)$ ensure reliable
%decoding of $w_{21}$ and $w_{12}$ at user $1$ and user $2$, respectively. Other terms ensure reliable joint decoding of
%the message vector
%$(w_{10},w_{12},w_{20},w_{21})$ at the destination. The last term in $S_2,$ $S_3,$ $S_4$ shows the advantage of beamforming resulted from  coherent
 %transmission of $(w_{12},w_{21})$ from both users in the $3^{\text{rd}}$ phase.
%\vspace{-2 mm}
%\subsubsection*{ Inclusion of partial decode-forward relaying}

As a special case of the proposed scheme, if one user has no information to send and just relays the cooperative message part of the other user to the destination, the scheme becomes a partial decode-forward (PDF) scheme for the half-duplex relay channel. For example, if user $2$ has no information to send, then $\alpha_2=0,$ $\rho_{22}=\rho_{20}=0$ and $\rho_{23}=(1-\alpha_1)^{-1}P_2$. The achievable rate for user $1$ $(R_1)$ in this case is given as in (\ref{th1Grrsimp}) with $J_2=0,$ $J_1=S_1,$
$S_2=S_4$ and $S_3$ redundant since $S_3>S_1$. Therefore, for the half-duplex relay channel, the achievable rate in (\ref{th1Grrsimp})  becomes $R_1\leq \min\{J_1,S_4\}$.

In the original PDF scheme for the full-duplex relay channel \cite{hsce6120}, the source sends both message parts and the relay decodes
 only one. In \cite{Fanny1}, this partial decode-forward relaying is adapted to $2$ phases for the half-duplex relay channel. In \cite{HoMa, HoMa2}, a slightly different  two-phase
 scheme  is proposed where only one message part is transmitted in the first phase. Our scheme also transmits only one message part in
 each of the first two phases; however, it is equivalent to the PDF scheme in \cite{hsce6120, Fanny1} as stated in the following Corollary:
%\vspace{-2 mm}
\begin{cor}
In the proposed scheme, each user transmits and decodes only one message part in the first two phases. But because of message splitting, this scheme is equivalent to a partial decode-forward scheme where each user alternatively transmits both message parts and decodes only one part in the first two phases.
\end{cor}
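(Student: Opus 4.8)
The plan is to prove the equivalence at the level of achievable rate regions, i.e.\ to show that the region (\ref{th1Grrsimp}) produced by the proposed scheme coincides with the region of the alternative partial decode-forward (PDF) scheme in which, during its transmitting phase, each user superimposes and sends \emph{both} its cooperative and private message parts while the listening user decodes only the cooperative part. First I would write the alternative scheme explicitly: in phase $1$ user $1$ transmits $X_{11}=\sqrt{\rho_{11}}\,U_1(w_{12})+\sqrt{\rho_{11}'}\,V_1(w_{10}^{(1)})$, splitting the private message as $w_{10}=(w_{10}^{(1)},w_{10}^{(3)})$ with $R_{10}^{(1)}+R_{10}^{(3)}=R_{10}$, the first part carried in phase $1$ and the second in phase $3$ as in (\ref{sigtr}); symmetrically for user $2$. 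The listening user applies the same decoding as before but now treats the superimposed private codeword as additive interference, while the destination still performs joint decoding across all three phases.

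I would then establish a two-way inclusion of the two rate regions. The inclusion ``proposed $\subseteq$ PDF'' is immediate, since setting $\rho_{11}'=0$ (equivalently $R_{10}^{(1)}=0$) in the alternative scheme suppresses the phase-$1$ private stream and reproduces exactly the signaling (\ref{sigtr}); hence every achievable pair of the proposed scheme is achievable by the PDF scheme. The content of the corollary is the reverse inclusion ``PDF $\subseteq$ proposed'', which amounts to showing that carrying part of the private message in phase $1$ never enlarges the region. The underlying reason is structural: the phase-$1$ private codeword is heard only by the destination and by the (non-decoding) listening user, so it plays a dual role, contributing to the destination's joint decoding like an extra point-to-point stream while degrading the cooperative rate through the interference it creates at the listening user.

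The step I expect to be the main obstacle is converting this structural observation into an exact region identity by a power-reallocation argument. Concretely, I would take the power $\alpha_1\rho_{11}'$ spent on the phase-$1$ private stream and reassign it to the phase-$3$ private codeword $V_1$, increasing $\rho_{10}$ so that the budget (\ref{powcsch2}) is preserved, and symmetrically for user $2$. Removing the phase-$1$ interference can only relax the cooperative-decoding constraints (\ref{spr12}) and (\ref{spr21}), and hence the bounds $J_1,J_2$ and the sum bounds that contain $I_1,I_2$; meanwhile the private rate surrendered in phase $1$ must be recovered through the enlarged terms $I_3,I_4$ that the increased phase-$3$ private power provides. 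The delicate part is to verify that this single reallocation \emph{simultaneously} preserves or relaxes every one of $J_1,J_2,S_1,\dots,S_4$, since the private power enters some bounds as useful signal to the destination and others as interference at the listening user; establishing the required monotonicity will lean on the superposition structure together with the concavity and monotonicity of $\log(1+\cdot)$. Once this holds the two regions coincide, and because the listening user in the alternative scheme still decodes only one part, the asserted equivalence with the PDF scheme of \cite{hsce6120,Fanny1} follows.
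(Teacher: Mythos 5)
Your overall plan---a two-way inclusion of rate regions, with the real content being that the phase-1 private stream can be removed without shrinking the region---is exactly the strategy of the paper's Appendix A. The gap is in your reallocation map. You move the phase-1 private energy $\alpha_1\rho_{11}'$ into the \emph{phase-3} private codeword, and this does not dominate constraint-by-constraint. In $J_1^{\dag}$ the private message contributes $\alpha_1\mathcal{C}(g_{10}^2\rho_{11}')+\alpha_3\mathcal{C}(g_{10}^2\rho_{10})$ before the move and $\alpha_3\mathcal{C}\!\left(g_{10}^2(\rho_{10}+\alpha_1\rho_{11}'/\alpha_3)\right)$ after; by strict concavity of $\log(1+\cdot)$ the latter is \emph{smaller} (concentrating equal-gain energy into a shorter slot loses rate). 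The cooperative gain from removing the interference, $\alpha_1\log\frac{(1+g_{12}^2\rho_{11})(1+g_{12}^2\rho_{11}')}{1+g_{12}^2(\rho_{11}+\rho_{11}')}$, need not cover this loss: it tends to $0$ as $\rho_{11}\to 0$, while the loss stays bounded away from zero when $\alpha_3$ is small. So your single reallocation can strictly decrease $J_1$, i.e.\ the ``delicate part'' you flagged is not merely delicate---it fails.

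The repair is to reallocate \emph{within phase 1}, which is what the paper does: keep the phase-1 total power $\rho_t=\rho_{11}+\rho_{11}'$ fixed and transfer the private power to the cooperative codeword. Writing the relevant bracket of $J_1^{\dag}$ (cf.\ (\ref{difter})) as $\mathcal{C}\!\left(\frac{g_{12}^2(\rho_t-\rho_{11}')}{1+g_{12}^2\rho_{11}'}\right)+\mathcal{C}(g_{10}^2\rho_{11}')=\mathcal{C}(g_{12}^2\rho_t)-\log(1+g_{12}^2\rho_{11}')+\log(1+g_{10}^2\rho_{11}')$, its derivative in $\rho_{11}'$ is $(g_{10}^{-2}+\rho_{11}')^{-1}-(g_{12}^{-2}+\rho_{11}')^{-1}<0$ exactly when $g_{12}>g_{10}$, so the optimum is $\rho_{11}'=0$; this is the paper's Lagrangian computation around (\ref{T1opt}). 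This move leaves the destination's total received phase-1 power $g_{10}^2(\rho_{11}+\rho_{11}')$ unchanged, hence $S_2,S_3,S_4$ are untouched while $J_1,J_2,S_1$ can only increase---the simultaneous, constraint-wise domination you were seeking. Finally, the equivalence also needs the case split you omitted: when $g_{12}\leq g_{10}$ (and/or $g_{21}\leq g_{20}$) the sign above flips, but then requiring the other user to decode is itself suboptimal, so $\alpha_1^{\star}=0$ (resp.\ $\alpha_2^{\star}=0$) and both schemes collapse to the classical MAC or to one-sided cooperation; the paper treats these as separate cases to complete the proof.
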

\vspace{-2 mm}
\begin{proof}
See Appendix A.
\end{proof}
%%%%%%%%%%%%%%%%%%%%%%%%%%%%%%%%%%%%%%%%%%%%%%%%%%%%%%%%%%%%%%%
\vspace{-3mm}
\subsection{Rate Optimization Problems}
We now turn to the question of finding the optimal resource allocation to maximize either the individual rate or the sum rate obtained by the proposed scheme as in (\ref{th1Grrsimp}). While it may be tempting to treat the individual rate optimization problem as a special case of the sum rate problem by setting $\alpha_2=0$ and $R_2=0$, the two problems are in fact distinct. The rate region for a fixed set of parameters is usually a pentagon obtained from the intersection of the two individual rate lines and a sum rate line. The sum rate line usually  cuts the axes at points outside the rate region since the sum rate constraint is larger than either individual rate constraint. Thus, the individual rate optimization problem cannot be obtained from the sum rate problem by simply setting a user's rate to zero. More specifically,
the individual rate optimization problem has only $2$ rate constraints whereas the sum rate problem has $4$ rate constraints. By setting $\alpha_2 = 0$ and $R_2 = 0$, the $4$ constraints in the sum rate problem do not reduce to the $2$ constraints in the individual rate problem.

The optimal power allocations are also different in these $2$ problems. Specifically,
\begin{itemize}
    \item For individual rate maximization, the power constraint in (\ref{powcsch2}) at user $2$ becomes $\rho_{21}=\rho_{20}=0$ and $\rho_{23}=(1-\alpha_1)^{-1}P_2$.
    \item For sum rate maximization, even with $\alpha_2=0$, the power constraint at user $2$ becomes $\rho_{21}=0$ and $\rho_{20}+\rho_{23}=(1-\alpha_1)^{-1}P_2$.
\end{itemize}
 The $\rho_{20}$ that maximizes the sum rate is not necessary zero as in the individual rate problem, thus the two problems are different.
  Consequently, we consider them separately.% these two problems separately in Sections \ref{sec:indivr} and \ref{sec:sumr}.
%\vspace{-2 mm}
%%%%%%%%%%%%%%%%%%%%%%%%%%%%%%%%%%%%%%%%%%%%%%%%%%%%%%%%%%%%%%%%%%%%%%%%%%%%%%%%%%%%%%%%%%%%%%%%%%%%%%%%%%%%%%%%
%%%%%%%%%%%%%%%%%%%%%%%%%%%%%%%%%%%%%%%%%%%%%%%%%%%%%%%%%%%%%%%%%%%%%%%%%%%%%%%%%%%%%%%%%%%%%%%%%%%%%%%%%%%%%%%%%.
\section{Individual Rate Maximization}\label{sec:indivr}
In this section, we fix the phase durations and derive the optimal power allocation that maximizes the individual rate.
We formulate the problem as a convex optimization and analyze the KKT conditions. We then obtain a theorem for the optimal schemes resulting from this optimization and
discuss its implication on practical systems such as cellular networks.
%This optimization method is very common and used successfully in \cite{itwt, gold1, FBC, alou1}.
 %We use the same method for the sum rate as given in Section \ref{sec:sumr}\\
\vspace{-3mm}
 \subsection{Problem Setup}
The individual rate can be maximized for a  user when the other user has no information to send but only helps relay information of this user. In this case, the MAC-TC resembles the half-duplex relay channel in \cite{HoMa, HoMa2}.
Specifically, consider the optimal parameters that maximize $R_1$ in (\ref{th1Grrsimp}) given that $R_2=0$. We can directly see that for user $2$,
the optimal parameters are
\vspace{-1 mm}
\begin{align}\label{powal}
\alpha_2^{\star}=0,\;\rho_{22}^{\star}=\rho_{20}^{\star}=0,\; \text{and}\; \rho_{23}^{\star}=P_2/(1-\alpha_1).
\end{align}
%\vspace{-1 mm}
\noindent (Throughout this paper, superscript $^\star$ indicates the optimal value.) These parameters make $J_1=S_1,$ $S_2=S_4,$ and $S_3>R_1$
in (\ref{th1Grrsimp}).

Therefore, for fixed $\alpha_1$, $R_1$ is maximized by considering the minimum between $J_1$ and $S_4$ in an optimization problem which can
be expressed as
%\vspace{-2 mm}
\noindent
\begin{subequations}\label{Eq:bed5}
\begin{align}%\label{Eq:bed5}
\max_{\substack{\underline{\rho}, R_1}}\;R_1&\nonumber\\
\text{s.t.}\;R_1&\leq \alpha_1{\cal C}\!\left(g_{12}^2\rho_{11}\right)+(1-\alpha_1){\cal C}\!\left(g_{10}^2\rho_{10}\right)\label{Eq1:bed5}\\
\;R_1&\leq \alpha_1{\cal C}\!\left(g_{10}^2\rho_{11}\right)+(1-\alpha_1){\cal C}(\zeta)\label{Eq2:bed5}\\
\;P&=\alpha_1\rho_{11}+(1-\alpha_1)\left(\rho_{10}+\rho_{13}\right)\label{Eq3:bed5}\\
\;\rho_i&\geq 0,\;\text{for}\;i\in \{10,\;11,\;13\},\label{Eq4:bed5}
\end{align}
\end{subequations}
where $\zeta$ is as in (\ref{jowz}) with power allocations in (\ref{powal}), $\underline{\rho}=[\rho_{10}\; \rho_{11}\;\rho_{13}]$ and ${\cal C}(x)=\log(1+x)$. With fixed $\alpha_1$, this problem is convex and can be solved using the KKT conditions as shown subsequently.

The optimization in (\ref{Eq:bed5}) is similar to that considered in \cite{HoMa, HoMa2}. However, in these references, the formulation is for a fixed duration $(\alpha_1)$ and fixed transmit power at each phase (fixed $\rho_{11}$ and the sum $\rho_{10}+\rho_{13}$). The problem is only to find the optimal power allocations for the private $(\rho_{10})$ and the cooperative $(\rho_{13})$ message parts. Afterward, the optimal $\alpha_1$ and transmit power in each phase $(\rho_{11}$ and the  sum  $\rho_{10}+\rho_{13})$ are obtained numerically. On the one hand, this formulation simplifies the derivation of the optimal $\rho_{10}$ and $\rho_{13}$ because with fixed $\rho_{11}$ and $\rho_{10}+\rho_{13}$, $J_1$ becomes an increasing function of $\rho_{10}$ while $S_4$ becomes a decreasing function. Therefore, the optimal $\rho_{10}$ and $\rho_{13}$ can be obtained by simply solving $J_1=S_4$, if a solution exists. On the other hand, this problem formulation requires two-dimensional numerical search for the optimal $\rho_{11}$ and $\alpha_1$.

In this paper, we only fix $\alpha_1$ and analytically derive the optimal power allocations for all message parts $(\rho_{11},\rho_{10},\rho_{13})$ such that the KKT conditions for problem (\ref{Eq:bed5}) are satisfied. Then, the optimal phase duration $\alpha_1$ can be obtained numerically. Therefore, our algorithm requires numerical search for $\alpha_1$ only instead of both $\alpha_1$ and $\rho_{11}$ as in \cite{HoMa, HoMa2}.

In \cite{Fanny1}, another optimization problem is considered for minimizing the power consumption at a given transmission rate for the half-duplex relay channel.
The coding scheme in \cite{Fanny1} is slightly different from our proposed scheme in that  the source sends both message
parts in the $1^\text{st}$ phase and the relay decodes only one. This scheme
is the same as the  scheme analyzed  in Appendix A when $\alpha_2=0$. We have shown that the scheme in Appendix A and our proposed scheme
%this scheme in \cite{Fanny1} and our
%scheme as optimized in (\ref{Eq:bed5})
 are equivalent for the Gaussian channel since the optimal power for the non-decoded message
part at the relay is zero. Therefore, the practical designer can choose between our algorithm or that proposed in \cite{Fanny1} based on the requirement,
 whether it is to achieve the maximum transmission rate or to save power at a given data rate.
\vspace{-4 mm}
\subsection{Optimization Problem Analysis}\label{caseI}
Optimization problem (\ref{Eq:bed5}) is convex because the objective function and inequality constraints
are concave functions of $\rho_{i},\; i\in\{10,12,13\}$ while the equality constraint is linear \cite{Boyds}. Since the objective function and inequality constraints are continuously differentiable, KKT conditions are necessary
and sufficient for optimality.  The Lagrangian function for problem (\ref{Eq:bed5}) is
\begin{align}\label{lagin}
&L(R_1,\underline{\rho},\underline{\lambda},\underline{\mu})=R_1 -\lambda_0(R_1-J_1)-\lambda_1(R_1-S_4)\nonumber\\
&-\lambda_2\left(\alpha_1 \rho_{11}+(1-\alpha_1)(\rho_{10}+\rho_{13})-P\right)-\underline{\mu}\cdot\underline{\rho},
\end{align}
where $\underline{\rho}=[\rho_{10}\; \rho_{11}\; \rho_{13}]$ is the power allocation vector, $\underline{\lambda}=[\lambda_0\; \lambda_1\; \lambda_2]$ is the Lagrangian multipliers vector associated with constraints (\ref{Eq1:bed5}), (\ref{Eq2:bed5}) and (\ref{Eq3:bed5}),  $\underline{\mu}=[\mu_{10}\;\mu_{12}\;\mu_{13}]$ is the Lagrangian multipliers vector associated with the non-negative power constraints (\ref{Eq4:bed5}).
The KKT conditions are
\begin{subequations}
\begin{align}%\label{KKTind}
\nabla_{\underline{\rho}}L(R_1,\underline{\rho},\underline{\lambda},\underline{\mu})&=0,\label{KKTind1}\\
P_1-\alpha_1\rho_{11}+(1-\alpha_1)\left(\rho_{10}+\rho_{13}\right)&=0,\label{KKTind2}\\
R_1-J_1&\leq 0,\label{KKTind3}\\
R_1-S_4&\leq 0,\label{KKTind4}\\
%\underline{\rho}&\geq 0,\label{KKTind5}\\
\rho_i&\geq 0,\label{KKTind5}\\
%\underline{\lambda},\underline{\mu}&\geq0,\label{KKTind6}\\
\lambda_j\geq0,\;\mu_i&\geq0,\label{KKTind6}\\
%\underline{\mu}\underline{\rho}^{T}&=0,\label{KKTind7}
\mu_i \rho_i&=0,\label{KKTind7}\\
\lambda_0(R_1-J_1)&=0,\label{KKTind8}\\
%\end{align}
%\begin{align}
\lambda_1(R_1-S_4)&=0.\label{KKTind9}
\end{align}
\end{subequations}
%$i\in \{10,\;12,\;13\}$, $j\in \{0,\;1,\;2\}$ and
where $i\in \{10,\;12,\;13\}$, $j\in \{0,\;1,\;2\}$ and $\nabla_x f$ is the gradient of $f(.)$ at $x$. Specific expressions for $\nabla_{\underline{\rho}}L(R_1,\underline{\rho},\underline{\lambda},\underline{\mu})$ are given in Appendix B.

Depending on the relation between $g_{12}$ and $g_{10}$, several cases for problem (\ref{Eq:bed5}) can occur:
\vspace{2mm}
\newsavebox\unistress
\begin{lrbox}{\unistress}
  \begin{minipage}{0.9\textwidth}
  \vspace{-3 mm}
        \begin{align}\label{def1}
        a_1&=g_{10}^{-2}-g_{12}^{-2},\;\;\;\;\;a_2=g_{10}^{-2}+\rho_{11},\;\;\;\;\;a_3=g_{12}^{-2}+\rho_{11},\\
        a_4&=1+g_{20}g_{10}^{-1}\sqrt{\rho_{23}/\rho_{13}},\;\;\;\;\;\;\;\;\;\;\;\;\;\;\;\;\;\;
        a_5=\rho_{13}+\frac{g_{20}^2}{g_{10}^2}\rho_{23}+2\frac{g_{20}}{g_{10}}\sqrt{\rho_{13}\rho_{23}},\nonumber\\
        b_1&=a_4(P_1+g_{10}^{-2}-(1-\alpha_1)\rho_{13})+\alpha_1a_1+(1-\alpha_1)(a_4a_1+a_5),\nonumber\\
        b_2&=a_1P_1+(1-\alpha_1)\frac{g_{20}^2}{g_{10}^2}\rho_{23}+\frac{1}{g_{10}^2}a_1
        +2(1-\alpha_1)a_1\frac{g_{20}}{g_{10}}\sqrt{\rho_{13}\rho_{23}},\nonumber\\
        f_1&=\;\alpha_1 {\cal C}\!\left(g_{10}^2\rho_{11}\right)+
        (1-\alpha_1){\cal C}\!\left(g_{10}^2(\rho_{10}+a_5)\right)
        \;-\alpha_1{\cal C}\!\left(g_{12}^2\rho_{11}\right),\nonumber\\
        f_2(\rho_{13})&=\;(1-\alpha)^{-1}(P_1-\alpha \rho_{11}^{\star})-\rho_{13}-g_{10}^{-2}\left(2^{f_1/(1-\alpha)}-1\right)\nonumber\\
        f_3&=\frac{1-\alpha_1}{\alpha_1}\cdot {\cal C}\!\left(g_{10}^2a_5\right).\nonumber\\
        f_4(\rho_{13})&=\frac{P_1-(1-\alpha_1) \rho_{13}}{\alpha_1}-\frac{1}{g_{10}^2}\left(\frac{1-g_{12}^2g_{10}^{-2}}{2^{f_3}-g_{12}^2g_{10}^{-2}}-1\right),\nonumber
        \end{align}
  \vspace{1 mm}
  \end{minipage}
\end{lrbox}
\noindent
\begin{figure*}[ht]
\normalsize
\begin{center}
{\small TABLE I: ALGORITHM FOR MAXIMIZING THE INDIVIDUAL RATE.}\\
\vspace*{2mm}
\begin{tabular}{|c|c|}
\hline
\multicolumn{2}{|c|}{Definitions}\\
\hline
\multicolumn{2}{|c|}{\usebox{\unistress}}\\
%\vspace*{2mm}
\hline
Case  & Algorithm\\
\hline
$2.a:\lambda_0>0,\lambda_1>0$& $\!\!\!\!\!\!\!\!\!\!\!\!\!\!\!\!\!\!\rho_{13}^{\star}$ is obtained from solving $f_2(\rho_{13})=0$ with\\
$\;\;\;\;$and $\mu_{10}=0$ &   $
  %\!\!\!\!\!\!\!\!\!\!\!\!\!\!\!\!\!\!\!\!\!\!\!\!\!\!\!\!\!\!\!\!\!\!
  \!\!\!\!\!\!\!\!\!\!\!\!\!\!\!\!\!\!\!
    \rho_{11}^{\star}=(2a_4)^{-1}(b_1+\sqrt{b_1^2-4a_4b_2})-g_{10}^{-2}$
  and\\
& $%\!\!\!\!\!\!\!\!\!\!\!\!\!\!\!\!\!\!\!\!\!\!\!\!\!\!\!\!\!
\!\!\!\!\!\!\!\!\!\!\!\!\!\!\!\!\!\!\!\!\!\!\!\!\!\!\!\!\!\!\!\!\!\!\!\!\!\!\!\!\!\!\!\!
  \rho_{10}^{\star}=\left(g_{20}g_{10}^{-1}\sqrt{\rho_{23}^{\star}/\rho_{13}^{\star}}+a_3/a_2\right)$\\
  &$%\!\!\!\!\!\!\!\!\!\!\!\!\!\!\!\!\!\!\!\!\!\!\!\!\!\!\!\!\!\!
  \;\;\;\;\;\;\;\;\;\;\;\;\;\;\times\left(\left(1+g_{20}g_{10}^{-1}\sqrt{\rho_{23}^{\star}/\rho_{13}^{\star}}\right)a_3-(a_3/a_2)a_5\right)
  -g_{10}^{-2}$\\
\hline
$2.b:\lambda_0>0,\lambda_1>0$ & $\!\!\!\!\!\!\!\!\!\!\!\!\!\!\!\!\!\!\!\!\!\!\!\!\!\!\!\!\!\!\!\!\!\!\!\!\!\!\!\!\!\!\!\!\!\!\!\!\!
\!\!\!\!\!\!\!\!\!\!\!\!\!\!\!\!\!\!\!\!\!\!\!\!\!\!\!\!\!\!\!\!\!\!\!\!\!\!\!\!\!\!\!\!\!\!\!\!\!\!\!\!\rho_{10}^{\star}=0$ and \\
$\;\;\;$and $\mu_{10}>0$ & $\!\!\!\!\!\!\!\!\!\!\!\!\!\!\!\!\!\rho_{13}^{\star}$ is obtained from solving $f_4(\rho_{13})=0$ with\\
 & $\!\!\!\!\!\!\!\!\!\!\!\!\!\!\!\!\!\!\!\!\!\!\!\!\!\!\!\!\!\!\!\!\!
 \!\!\!\!\!\!\!\!\!\!\!\!\!\!\!\!\!\!\!\!\!\!\rho_{11}^{\star}=\alpha_1^{-1}(P_1-(1-\alpha_1)\rho_{13}^{\star}).$\\
\hline
$3.a:\lambda_1=0, \lambda_0>0,$ & $\!\!\!\!\!\!\!\!\!\!\!\!\!\!\!\!\!\!\!\!\!\!\!\!\!\!\!\!\!\!\!\!\!\!\!\!\!\!\!\!\!\!\!\!\!\!\!
\rho_{13}^{\star}=0,\;\;\rho_{11}^{\star}=P_1+(1-\alpha_1)a_1,$\\
$\;\;\;\;\mu_{10}=0, \mu_{13}>0$ &$\!\!\!\!\!\!\!\!\!\!\!\!\!\!\!\!\!\!\!\!\!\!\!\!\!\!\!\!
\!\!\!\!\!\!\!\!\!\!\!\!\!\!\!\!\!\!\!\!\!\!\!\!\!\!\!\!\!\!\!\!\!\!\!\!\!\!\!\!\!\!\!\!\!\!\!\!\!\!\!\!\!\!\!\!\!\!\!
\rho_{10}^{\star}=P_1-\alpha_1a_1.$  \\
\hline
$3.b:\lambda_1=0,\lambda_0>0$ & $\!\!\!\!\!\!\!\!\!\!\!\!\!\!\!\!\!\!\!\!\!\!\!\!\!\!\!\!\!\!
\!\!\!\!\!\!\!\!\!\!\!\!\!\!\!\!\!\!\!\!\!\!\!\!\!\!\!\!\!\!\!\!\!\!\!\!\!\!\!\!\!\!\!\!\!\!\!\!\!\!\!\!\!\!\!\!\!\!
\!\!\!\!\!\!\rho_{13}^{\star}=\rho_{10}^{\star}=0,$\\
$\;\;\;\;\mu_{10}\!>0,\mu_{13}>0$ & $\!\!\!\!\!\!\!\!\!\!\!\!\!\!\!\!\!\!\!\!\!\!\!\!\!\!\!\!\!\!\!
\!\!\!\!\!\!\!\!\!\!\!\!\!\!\!
\!\!\!\!\!\!\!\!\!\!\!\!\!\!\!\!\!\!\!\!\!\!\!\!\!\!\!\!\!\!\!\!\!\!\!\!\!\!\!\!\!\!\!\!\!\!\!\!\!\!\!\rho_{11}^{\star}=P_1/\alpha_1.$\\
\hline
\end{tabular}
%\\
%\vspace*{2mm}
%{\small TABLE II: ALGORITHM FOR MAXIMIZING THE INDIVIDUAL RATETable I: Algorithm for maximizing the individual rate.}\\
%\hline
\end{center}
\vspace*{-6mm}
\end{figure*}
\begin{enumerate}
  \item Case $1$: $g_{12}< g_{10}$.\\  %The optimal scheme is the same as that of the MAC.\\
  $\;\;\;$In this case, $J_1<S_4$. Therefore, (\ref{KKTind3}) can be satisfied with equality $(R_1^{\star}=J_1)$  while (\ref{KKTind4}) is satisfied without equality $(R_1^{\star}<S_4)$. Then, (\ref{KKTind4}) is an inactive constraint and from (\ref{KKTind6}), (\ref{KKTind8}) and (\ref{KKTind9}), $\lambda_0^{\star}>0$ and $\lambda_1^{\star}=0$ are optimal.
  %Hence, $\lambda_0^{\star}>0$ and $\lambda_1^{\star}=0$ satisfy the KKT conditions in (\ref{KKTind3}), (\ref{KKTind4}), (\ref{KKTind6}), (\ref{KKTind8}) and (\ref{KKTind9}).
  Moreover, it is known \cite{hsce6120} that PDF relaying is helpful only when the source-relay link quality is better than the source-destination link quality. Since here $g_{12}< g_{10}$, requiring user $2$ to decode will decrease the transmission rate because it can decode at a lower rate than the destination. Therefore, user $1$ will choose to send
  its message directly to the destination $(\alpha_1^{\star}=0)$. Then, with $\alpha_1^{\star}=0$ and from the KKT conditions (\ref{KKTind2}) and (\ref{KKTind5}), the optimal parameters are
  \begin{align*}
  \alpha_1^{\star}=0,\;\rho_{11}^{\star}=\rho_{13}^{\star}=0,\;\rho_{10}^{\star}=P_1,
  \end{align*}
  which leads to the maximum rate as $R_1^{\star}={\cal C}\!\left(g_{10}^2P_1\right).$
%%%%%%%%%%%%%%%%%%%%%%%%%%%%%%%%%%%%%%%%%%%%%%%%%%%%%%%%%%%%%%%%%%%%%%%%%%%%%%%%%%%%%%%%%%%%%%%%%%
%%%%%%%%%%%%%%%%%%%%%%%%%%%%%%%%%%%%%%%%%%
  \item Case $2$: $g_{12}\geq g_{10}$ and there is intersection between $J_1$ and $S_4$.\\
  $\;\;\;$In this case, both inequality constraints (\ref{Eq1:bed5}) and (\ref{Eq2:bed5}) are tight. Both constraints (\ref{KKTind3}) and (\ref{KKTind4}) can be satisfied with equality $(R_1^{\star}=J_2=S_4)$
  %. Therefore, both constraints ((\ref{KKTind3}) and (\ref{KKTind4}))
  and are active. KKT conditions
  (\ref{KKTind3}), (\ref{KKTind4}), (\ref{KKTind6}), (\ref{KKTind8}) and (\ref{KKTind9}) then lead to $\lambda_0^{\star}>0$ and $\lambda_1^{\star}>0$.
%%%%%%%%%%%%%%%%%%%%%%%%%%%%%%%%%%%%%%%%%
%%%%%%%%%%%%%%%%%%%%%%%%%%%%%%%%%%%%%%%%%%%%%%%%%%%%%%%%%%%%%%%%%%%%%%%%%%%%%%%%%%%%%%%%%%%%%%%%%
  \item Case $3$: $g_{12}\geq g_{10}$ and there is no intersection between $J_1$ and $S_4$. \\
  $\;\;\;$Here, we must have either $J_1<S_4$ or $J_1>S_4$ for all valid power allocations. However, the case $J_1>S_4$
  cannot occur because with $\rho_{13}=P_1/(1-\alpha)$, then $J_1=0$ and $S_4>0$. Therefore, $J_1<S_4$ is the only valid case. Hence, as in case $1$,
  $\lambda_0^{\star}>0$ and $\lambda_1^{\star}=0$ are optimal on KKT conditions (\ref{KKTind3}), (\ref{KKTind4}), (\ref{KKTind6}), (\ref{KKTind8}) and (\ref{KKTind9}).
  %%%%%%%%%%%%%%%%%%%%%%%%%%%%%%%%%%%%%%%%%%%%%%%%%%%%%
  \end{enumerate}
\indent In all cases, problem (\ref{Eq:bed5}) can be solved analytically such that the KKT conditions (\ref{KKTind1}) -- (\ref{KKTind9})
are satisfied. In each case, depending on the optimal values of $\mu$ being zero or strictly positive, which means that the corresponding
power is strictly positive or zero, there can be sub-cases. Algorithms for solving these cases are given in Table I. % and proved in Appendix B.
See Appendix B for the proof.
%\begin{proof}
%See Appendix B.
%\end{proof}
%\indent %After getting the optimal power allocation at a fixed $\alpha_1$, the optimal $\alpha_1^{\star}$ can then be obtained numerically by
%running the above
%algorithm sweeping $\alpha_1\in[0,1]$
%and select the one with maximum individual rate.
%However, the time duration can be predetermined in many practical communication systems.
%\vspace{-2 mm}
%and (\ref{optr1c4}).
%%\vspace{-1 mm}
%\begin{align}\label{bkrma1}
%\rho_{10}^{\star}=\alpha_1^{-1}(P_1-(1-\alpha_1)\rho_1^{\star}).
%\end{align}
%%%%%%%%%%%%%%%%%%%%%%%%%%%%%%%%%%%%%%%%%%%%%%%%%%%%%%%%
%%%%%%%%%%%%%%%%%%%%%%%%%%%%%%%%%%%%%%%%%%%%%%%%%%%%%%%%%%
\subsection{Optimal Schemes}\label{sec2c}
Based on the above algorithm, we can identify the specific scheme  optimal for each channel configuration as summarized in the following theorem:
%%\vspace{-2 mm}
\begin{thm}
The optimal transmission scheme for maximizing the individual rate of a half-duplex relay channel is obtained by reducing the proposed $3$-phase scheme into $2$ phases by setting $\alpha_2=0$. Then, the optimal scheme is such that in
\begin{itemize}
  \item
  \textbf{Case $1$:} User $1$ performs direct transmission to send $w_1$
  during the whole transmission time and user $2$ stays silent.
  \item
  \textbf{Case $2.a$:} For a fixed $\alpha_1$, the two users perform partial decode-forward (PDF) relaying \textbf{with} message repetition where in
  \begin{itemize}
    \item The $1^{\text{st}}$ phase: user $1$ sends $w_{12}$, user $2$ decodes it.
    \item The $3^{\text{rd}}$ phase: user $1$ sends $(w_{10},w_{12})$ and  user $2$ sends $w_{12}$.
  \end{itemize}
  \item
  \textbf{Case $2.b$}: For a fixed $\alpha_1$, the two users perform decode-forward (DF) relaying  where in
  \begin{itemize}
    \item The $1^{\text{st}}$ phase: user $1$ sends $w_1$, user $2$ decodes it.
    \item The $3^{\text{rd}}$ phase: both users send $w_1$.
  \end{itemize}
  \item
  \textbf{Case $3.a$}: For a fixed $\alpha_1$, the two users perform partial decode-forward (PDF) relaying \textbf{without} message repetition where in
  \begin{itemize}
    \item The $1^{\text{st}}$ phase: user $1$ sends $w_{12}$, user $2$ decodes it.
    \item The $3^{\text{rd}}$ phase: user $1$ sends $w_{10}$, user $2$ sends $w_{12}$.
  \end{itemize}
  \item
  \textbf{Case $3.b$:} For a fixed $\alpha_1$, the two users perform $2$-hop transmission where in
  \begin{itemize}
    \item The $1^{\text{st}}$ phase: user $1$ sends $w_{1}$, user $2$ decodes it.
    \item The $3^{\text{rd}}$ phase: user $2$ sends $w_{1}$, user $1$ stays silent.
  \end{itemize}
\end{itemize}
In all cases, the destination only decodes at the end of the $3^{\text{rd}}$
 phase based on the signals received in both phases $1$ and $3$.
 %\vspace{-3 mm}
\end{thm}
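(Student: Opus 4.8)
The plan is to prove this theorem not by re-deriving any optimization, but by \emph{interpreting} the optimal power allocations collected in Table~I operationally: each case fixes which of the powers $(\rho_{11}^\star,\rho_{10}^\star,\rho_{13}^\star)$ are zero versus strictly positive, and I would translate that pattern back into the signaling of (\ref{sigtr}). First I would justify the reduction to two phases. For individual-rate maximization we impose $R_2=0$, and the argument in Section~\ref{sec:indivr} shows the optimal user-$2$ parameters are those in (\ref{powal}), namely $\alpha_2^\star=0$, $\rho_{22}^\star=\rho_{20}^\star=0$, and $\rho_{23}^\star=P_1/(1-\alpha_1)$ (with $P_2$). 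Hence the second phase vanishes, user~$2$ carries only the coherent codeword $U_3$ in phase~$3$, and the scheme collapses to a two-phase (phases $1$ and $3$) relay-type configuration whose only free powers are $\rho_{11}$, $\rho_{10}$, $\rho_{13}$.

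Next I would set up the dictionary between powers and operations, read directly from (\ref{sigtr}) and (\ref{jowz}): $\rho_{11}>0$ means user~$1$ transmits the cooperative part $w_{12}$ in phase~$1$ and user~$2$ decodes it; $\rho_{10}>0$ means user~$1$ superimposes a genuine private part $w_{10}$ in phase~$3$; $\rho_{13}>0$ means user~$1$ \emph{repeats} $w_{12}$ through the beamforming codeword $U_3$ in phase~$3$; and $\rho_{23}^\star>0$ (always, by (\ref{powal})) means user~$2$ forwards $w_{12}$ in phase~$3$. With this dictionary the proof reduces to reading off each row of Table~I and invoking complementary slackness (\ref{KKTind7}) to convert each active $\mu$ into a vanishing power. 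Concretely: Case~$1$ gives $\alpha_1^\star=0$, $\rho_{11}^\star=\rho_{13}^\star=0$, $\rho_{10}^\star=P_1$, so there is no first phase, the whole message is private, and user~$2$ is idle, i.e.\ direct transmission. Case~$2.a$ has $\mu_{10}=0$ hence $\rho_{10}^\star>0$ together with $\rho_{11}^\star,\rho_{13}^\star>0$, giving full superposition plus repetition (PDF \textbf{with} message repetition). Case~$2.b$ has $\mu_{10}>0$, so (\ref{KKTind7}) forces $\rho_{10}^\star=0$ (no private part, $w_1=w_{12}$) while $\rho_{11}^\star,\rho_{13}^\star>0$, which is DF. Case~$3.a$ has $\mu_{13}>0$ forcing $\rho_{13}^\star=0$ with $\rho_{11}^\star,\rho_{10}^\star>0$, i.e.\ PDF \textbf{without} repetition. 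Case~$3.b$ has $\mu_{10},\mu_{13}>0$ forcing $\rho_{10}^\star=\rho_{13}^\star=0$, leaving only $\rho_{11}^\star=P_1/\alpha_1$, so user~$1$ is silent in phase~$3$ and only user~$2$ forwards: a two-hop scheme.

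The genuinely substantive steps, rather than the naming itself, are the positivity checks that underpin each label. For Case~$3.a$ I would verify $\rho_{11}^\star=P_1+(1-\alpha_1)a_1>0$ and $\rho_{10}^\star=P_1-\alpha_1 a_1>0$, where the first is immediate from $a_1=g_{10}^{-2}-g_{12}^{-2}\ge 0$ (valid since $g_{12}\ge g_{10}$ in (\ref{def1})), and the second must be argued to hold throughout Case~$3.a$'s regime; similarly I would confirm that the $\lambda_1^\star=0$ reading in Cases~$3.a$ and $3.b$ is consistent with the $S_4$ constraint in (\ref{th1Grrsimp}) being slack, matching the ``no intersection'' geometry assumed there. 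The one conceptual subtlety to pin down is that a vanishing private power $\rho_{10}^\star=0$ legitimately corresponds to the rate split collapsing so that $w_1$ is sent \emph{entirely} as the cooperative part $w_{12}$; this is what licenses the operational phrasing ``user~$1$ sends $w_1$'' in Cases~$2.b$ and~$3.b$. I expect this matching of degenerate power patterns to named schemes to be the main (though not deep) obstacle, since it requires care that each collapse is interpreted consistently with the message-splitting convention $w_1=(w_{10},w_{12})$.

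Finally, the closing claim that the destination decodes only at the end of phase~$3$ using the signals of phases~$1$ and~$3$ needs no separate argument: it is inherited directly from the joint-decoding rule (\ref{sprd}), which already combines the phase-$1$ and phase-$3$ observations, specialized to $\alpha_2=0$.
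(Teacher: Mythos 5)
Your proposal is correct and follows essentially the same route as the paper: the paper's own proof is precisely the observation that the schemes are read off from the KKT cases of Section~\ref{caseI} and the power-allocation patterns in Table~I, which is the dictionary (zero vs.\ positive $\rho_{11},\rho_{10},\rho_{13}$, with user~$2$'s parameters fixed by (\ref{powal})) that you spell out case by case. Your added positivity checks and the remark that a vanishing $\rho_{10}^{\star}$ means the split collapses to $w_1=w_{12}$ are consistent with (and slightly more explicit than) what the paper leaves implicit.
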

\begin{proof}
These optimal schemes are obtained from the cases defined in Section \ref{caseI} and the algorithm in Table I.
\end{proof}
%\vspace{-2 mm}
Theorem $2$ provides a comprehensive coverage for all possible subschemes of the general PDF scheme for the half-duplex relay channel. Moreover, it provides the specific signaling for
each subscheme. Drawing on Theorem $2$, we can deduce the followings: First, direct transmission between user $1$ and the destination is preferred if the link connecting them
is stronger than the inter-user link. Second, PDF relaying with or without message repetition is optimal if the inter-user link is slightly stronger
than the link from user $1$ to the destination. Third, DF relaying is optimal if the inter-user link is much stronger than the link from user $1$ to the destination. Last,  two-hop transmission is
optimal when the inter-user link is slightly stronger than the link from user $2$ to the destination and is significantly (doubly) stronger than the link from user $1$ to the destination.

Building on
these subschemes, the practical designer of cellular, sensor or ad hoc networks can determine which subscheme to use based on the channel configuration
between the users and the destination. For example, in the uplink of a cellular network in an urban area, multiple mobiles can concurrently be active in each cell.
Two mobiles close to each other, especially with a line of sight, are likely to have strong inter-mobile link and can mutually benefit by cooperating with each other
to send their messages to the base station. Because of the strong link quality between these two mobiles, Theorem $2$ suggests that the designer can use
full decode-forward as a simple and optimal scheme for this case.
More analysis and numerical examples about the existence of each case are provided in Section \ref{sec:numresult}.%\\
\section{Sum Rate Maximization}\label{sec:sumr}
%Although our channel resembles the relay channel when one user only has information to send, In many practical communication systems,
In this section, we derive the optimal power allocation that maximizes the sum rate for fixed phase durations $(\alpha_1,\alpha_2)$. Following a procedure similar to that for the individual rate,
we first formulate the problem as a convex optimization and analyze its  KKT conditions. We then obtain a theorem for the optimal scheme for each channel
configuration.
%Next, we give some discussion on the optimal phase durations and the $m-$user MAC-TC.
We also consider the optimization for the symmetric channel as a special case. Last, we analyze the maximum gain for the individual and the sum rate with respect
to the classical MAC.
%%%%%%%%%%%%%%%%%%%%%
\subsection{Problem Setup}
 When both users have information to send, the throughput (sum rate) is an important criteria for network operators.
The throughput shows a rate limit the network can tolerate before starting losing packets or having congestion.
%Here, we derive the optimal power allocation and phase duration that maximize the sum rate in (\ref{th1Grrsimp}).
Define the sum rate as $S_R=R_1+R_2$. With the $4$ constraints on the sum rate in (\ref{th1Grrsimp}), the optimization problem with fixed $\alpha_1$ and $\alpha_2$ can be posed as
\noindent
\begin{subequations}\label{dersum0}
\begin{align}%\label{dersum0}
%\vspace{-2 mm}
\max_{\substack{\underline{\rho},S_R}}\;&S_R,\\
\text{s.t.}\;&S_R\leq S_1,\;\;S_R\leq S_2,\label{1dersum0}\\
\;&S_R\leq S_3,\;\;S_R\leq S_4,\label{2dersum0}\\
\;&P_1=\alpha_1\rho_{11}+\alpha_3\left(\rho_{10}+\rho_{13}\right),\label{3dersum0}\\
\;&P_2=\alpha_2\rho_{22}+\alpha_3\left(\rho_{20}+\rho_{23}\right),\label{4dersum0}\\
%\end{align}
%\begin{align}
\;&\rho_i\geq 0,\;\text{for}\;i\in \{10,\;20,\;11,\;22,\;13,\;23\}\label{5dersum0}
%\;&\underline{\rho}\geq 0,\;\;\underline{\rho}=[\rho_{10}\;\rho_{20}\;\rho_{11}\;\rho_{22}\;\rho_{13}\;\rho_{23}]\label{5dersum0}
%\vspace{-2 mm}
\end{align}
\end{subequations}
%\begin{align}\label{dersum0}
%%\vspace{-2 mm}
%\max\;\min\;&\left(S_1, S_2, S_3, S_4\right),\\
%\text{s.t.}\;&P_1=\alpha_1\rho_{10}+\alpha_3\left(\rho_{11}+\rho_1\right),\nonumber\\
%&P_2=\alpha_2\rho_{20}+\alpha_3\left(\rho_{22}+\rho_2\right).\nonumber
%%\vspace{-2 mm}
%\end{align}
\noindent %Besides these constraint,
where $\underline{\rho}=[\rho_{10}\; \rho_{20}\; \rho_{11}\; \rho_{22}\; \rho_{13}\; \rho_{23}]$ is the power allocation vector. This sum rate problem has not been considered in the literature before.

Note that $J_1+J_2$ in (\ref{th1Grrsimp}) is another constraint on the sum rate, but this constraint is redundant
since $J_1+J_2>S_1$.
The sum of the first parts in $J_1$ and $J_2$ is equal to the first two parts in $S_1$.
 However, the sum of the second parts in $J_1$ and $J_2$ is bigger than the third part in $S_1$.
 This second part in $J_1(J_2)$
 results from decoding $w_{10}(w_{20})$ at the destination without interference from other message parts.  The third part in $S_1$ results
 from decoding both $w_{10}$ and $w_{20}$ at the destination without interference from other message parts and this rate is smaller than the rates obtained
 from decoding each part alone without interference.
%%\vspace{-6 mm}
\subsection{Optimization Problem Analysis}\label{caseS}
Similar to individual rate optimization, problem (\ref{dersum0}) is convex because the objective function and inequality constraints
are concave functions of $\underline{\rho}$ while the equality constraints are affine \cite{Boyds}. Since the objective function and inequality constraints are continuously differentiable, KKT conditions are necessary and sufficient for optimality. The Lagrangian function for  (\ref{dersum0}) is
\begin{align}\label{lagsum}
&L(S_R,\underline{\rho},\underline{\lambda},\underline{\mu})=
S_R -\sum_{k=0}^3\lambda_k(S_R-S_{k+1})\\
&\;-
\lambda_4(\alpha_1\rho_{11}+(1-\alpha_1-\alpha_2)\left(\rho_{10}+\rho_{13}\right)-P_1)\nonumber\\
&\;-\lambda_5(\alpha_2\rho_{22}+(1-\alpha_1-\alpha_2)\left(\rho_{20}+\rho_{23}\right)-P_2)-\underline{\mu}\cdot\underline{\rho},\nonumber
\end{align}
 where $\underline{\lambda}=[\lambda_0\; \lambda_1\; \lambda_2\; \lambda_3\; \lambda_4\; \lambda_5]$ is the Lagrangian multipliers vector associated with the rate and power constraints (\ref{1dersum0}), (\ref{2dersum0}), (\ref{3dersum0}),
and (\ref{4dersum0});  $\underline{\mu}=[\mu_{10}\; \mu_{20}\; \mu_{12}\; \mu_{21}\; \mu_{13}\; \mu_{23}]$ is the Lagrangian multipliers vector associated with the non-negative power constraints in (\ref{5dersum0}).
The KKT conditions are
\begin{subequations}
\begin{align}%\label{KKTind}
\nabla_{\underline{\rho}}L(S_R,\underline{\rho},\underline{\lambda},\underline{\mu})&=0,\label{KKTsum1}\\
%\end{align}
%\begin{align}
P_1-\alpha_1\rho_{11}+(1-\alpha_1-\alpha_2)\left(\rho_{10}+\rho_{13}\right)&=0,\label{KKTsum2}\\
P_2-\alpha_2\rho_{22}+(1-\alpha_1-\alpha_2)\left(\rho_{20}+\rho_{23}\right)&=0,\label{KKTsum3}\\
S_R-S_1\leq 0,\;\;\; S_R-S_2\leq 0,&\nonumber\\
S_R-S_3\leq 0,\;\;\; S_R-S_4&\leq 0,\label{KKTsum4}\\
\rho_i&\geq 0,\label{KKTsum5}\\
%\end{align}
%\begin{align}
\lambda_j\geq 0,\;\mu_i&\geq0,\label{KKTsum6}\\
\mu_i\rho_i&=0,\label{KKTsum7}\\
\lambda_0(S_R-S_1)=0,\;\;\; \lambda_1(S_R-S_2)=0,&\nonumber\\
\lambda_2(S_R-S_3)=0,\;\;\; \lambda_3(S_R-S_4)&=0,\;\;\;\label{KKTsum8}
\end{align}
\end{subequations}
where $i\in \{10,\;20\;12,\;22,\;13,\;23\}$ and $j\in \{0,\;1\;2,\;3,\;4,\;5\}$.
The exact expression for (\ref{KKTsum1}) is given in Appendix C.

Depending on the link qualities, problem (\ref{dersum0}) can specialize to several cases:
%\begin{itemize}
\begin{enumerate}
  \item Case $1$:  $g_{12}\leq g_{10}$ and $g_{21}\leq g_{20}$.\\
  In this case, we can see from the sum rate expressions in (\ref{th1Grrsimp}) that $S_1$ is the minimum among $(S_1,S_2,S_3,S_4)$. The constraint $S_R-S_1=0$ is then the only active constraint in (\ref{KKTsum4}). Therefore, KKT conditions  (\ref{KKTsum4}) and (\ref{KKTsum8}) lead to $\lambda_1^{\star}=\lambda_2^{\star}=\lambda_3^{\star}=0$ and $\lambda_0^{\star}>0$.
  Moreover, similar to the individual rate case, since $g_{12}<g_{10}$ and $g_{21}<g_{20}$, requiring each
user to decode part of the message of the other user will decrease the rate. The two users will choose to send their messages directly to the destination instead of
cooperating.  % because $(\alpha_1^{\star}=\alpha_2^{\star}=0)$.
  With $\alpha_1^{\star}=\alpha_2^{\star}=0$ and KKT conditions (\ref{KKTsum2}), (\ref{KKTsum3}) and (\ref{KKTsum5}), the optimal scheme is obtained by setting
  \begin{align}\label{moe1}
  \alpha_1^{\star}&=\alpha_2^{\star}=0,\;\;\rho_{11}^{\star}=\rho_{22}^{\star}=\rho_{13}^{\star}=\rho_{23}^{\star}=0,\nonumber\\
  \rho_{10}^{\star}&=P_1,\; \rho_{23}^{\star}=P_2,
  \end{align}
  \noindent which resembles the classical MAC with the maximum sum rate as $R_s^{\max}=C(g_{10}^2P_1+g_{20}^2P_2).$
%  \begin{align}
%  R_s^{\max}=C(g_{10}^2P_1+g_{20}^2P_2).
%  \end{align}
  \item Case $2$:  $g_{12}>g_{10}$ and $g_{21}>g_{20}$.\\
In this case, $S_2$ and $S_3$ are redundant because both are bigger than $S_4$ as noticed from (\ref{th1Grrsimp}). The constraints $S_R-S_2\leq0$ and $S_R-S_3\leq0$ in (\ref{KKTsum4}) are inactive and KKT conditions (\ref{KKTsum4}) and (\ref{KKTsum8}) lead to $\lambda_1^{\star}=\lambda_2^{\star}=0$. Moreover, we can show that neither $S_1 < S_4$ nor $S_1 > S_4$ holds for all power allocations that satisfy the power constraints. $S_4$ is not always less than $S_1$ because with $\rho_{10}=\alpha_3^{-1}P_1$ and $\rho_{20}=\alpha_3^{-1}P_2$, then $S_4=S_1$. Also, $S_1$ is not always less than $S_4$ because when maximizing $S_1$ alone, we can directly notice that $\rho_{13}^{\star}=\rho_{23}^{\star}=0$; then, regardless of the values of $\rho_{11}^{\star},$ $\rho_{22}^{\star},$ $\rho_{10}^{\star},$ and
$\rho_{20}^{\star}$, we obtain $S_4<S_1$ since $g_{12}>g_{10}$ and $g_{21}>g_{20}$. Therefore, both constraints $(S_R-S_1\leq0$ and $S_R-S_4\leq0)$ must be tight and active,  and KKT conditions  (\ref{KKTsum4})
and (\ref{KKTsum8}) lead to $\lambda_0^{\star}>0$ and $\lambda_3^{\star}>0.$
%%%%%%%%%%%%%%%%%%%%%%%%%%%%%%%%%%%%%%%%%%%%%%%%%%%%%%%%%%%%%%%%%%%%%%%%%%%%%%%%%%%%%%%%%%%%%%%%%%%%%%%%%%%%%%%%%%%%%%%%%%%%%%
  \item Case $3$: $g_{12}>g_{10}$ and $g_{21}\leq g_{20}$.\\
Since $g_{12}>g_{10}$, user $1$ will send part of its information via user $2$. However, because $g_{21}\leq g_{20}$, user $2$ will send its information directly to the destination. From KKT condition (\ref{KKTsum3}), the optimal scheme is obtained by setting $\alpha_2^{\star}=0$ and $\rho_{22}^{\star}=0$. By substituting these values into (\ref{th1Grrsimp}), we obtain $S_2=S_4$ and $S_3>S_4$. KKT conditions  (\ref{KKTsum4})
and (\ref{KKTsum8}) lead to  $\lambda_1^{\star}=\lambda_2^{\star}=0$. Therefore, for a fixed $\alpha_1$, we have the same optimization problem as in Case $2$ but with $\alpha_2^{\star}=0$ and $\rho_{22}^{\star}=0$.
  \item Case $4$: $g_{12}\leq g_{10}$ and $g_{21}>g_{20}$\\
This case is the opposite of Case $3$. For a fixed $\alpha_2$, problem (\ref{dersum0}) is considered with $\alpha_1^{\star}=0$, and $\rho_{11}^{\star}=0$.
\end{enumerate}
For a fixed pair $(\alpha_1, \alpha_2)$, solutions for cases $2,$ $3,$ $4$ can be obtained analytically such that KKT conditions in (\ref{KKTsum1})--(\ref{KKTsum8})
are satisfied as shown in Appendix C. The solutions are given in Table II with the following definitions:
\begin{align}\label{def2}
        \!\!a_1&=g_{10}^{-2}-g_{12}^{-2},\;\;\;\;\;\;a_2=g_{20}^{-2}-g_{21}^{-2},\\
        \!\!a_3&=g_{10}^{-2}+
        \rho_{11},\;\;\;\;\;\;\;\!a_4=g_{20}^{-2}+\rho_{22},\nonumber\\
        \!\!a_5&=\rho_{11}+g_{12}^{-2},\;\;\;\;\;\;\;\!a_6=1+g_{10}^2\rho_{10}+g_{20}^2\rho_{20},\nonumber\\
        \!\!a_7&=2g_{20}^2,\;\;\;\;\;\;\;\;\;\;\;\;\;\;\;a_8=\frac{2}{1-\alpha_1}g_{10}^2,\;\;\nonumber\\
        \!\!a_9&=1+g_{10}^2\rho_{10}+g_{20}^2(1-\alpha_1)^{-1}P_2,\nonumber\\
        \!\!a_{10}&=(1-\alpha_1)^{-1}(g_{10}^2P_1+\alpha_1)-g_{10}^2\rho_{10},\nonumber\\
        \!\!a_{11}&=\frac{1+g_{10}^2P_1+g_{20}^2P_2}{1-\alpha_1},\;\;\;\;\;\;\nonumber\\
        \!\!b_1&=2g_{20}^2(a_2+a_3)+a_6(2-b_3),\nonumber\\
        \!\!b_2&=\frac{2+\alpha_1}{1-\alpha_1}g_{10}^2a_1+2a_{11},\;\;\;b_3=\frac{2a_3-a_1}{a_3-a_1}\nonumber\\
        \!\!b_4&=a_1(a_9+3a_{10}),\;\;\;\;\;\;\;\;\;b_5=a_2(a_6+2g_{10}^2a_3-\mu a_6),\nonumber\\
        \!\!b_6&=c_2g_{20}^2+g_{10}g_{20}\left(2\sqrt{\rho_{13}}-\frac{a_5+a_1}{\sqrt{\rho_{13}}}\right),\nonumber\\
        \!\!b_7&=a_{12}c_5,\;\;\;\;\;
        \;\;\;\;\;\;\;\;\;\;\;b_8=b_2+g_{10}^2\rho_{13}-g_{10}^2(a_5+a_1),\nonumber\\
        \!\!b_9&=g_{10}g_{20}^{-1}(a_5+a_1)a_5^{-1}\sqrt{\rho_{13}}\nonumber\\
        %\end{align}
        %\begin{align}
        \!\!b_{10}&=\frac{(a_5+a_1)g_{10}\!\!\left(g_{10}+g_{20}\sqrt{\frac{\rho_{23}}{\rho_{13}}}\right)\!\!-\!\!
        \left(g_{10}\sqrt{\rho_{13}}+g_{20}\sqrt{\rho_{23}}\right)^2}
        {1+(a_5+a_1)a_5^{-1}\frac{g_{10}}{g_{20}}\sqrt{\frac{\rho_{13}}{\rho_{23}}}}.\nonumber\\
        \!\!f_1&=\alpha_3^{-1}\big[\alpha_1{\cal C}\!\left(g_{12}^2\rho_{11}\right)+\alpha_2{\cal C}\!\left(g_{21}^2\rho_{22}\right)\nonumber\\
        &\;\;\;
        -\alpha_1{\cal C}\!\left(g_{10}^2\rho_{11}\right)-\alpha_2{\cal C}\!\left(g_{20}^2\rho_{22}\right)\big],\nonumber\\
        \!\!f_2(a_6)&=2g_{10}^2a_3 -(a_3-a_1)^{-1}(2a_3-a_1)a_6-(2^{f_1}-1)a_6\nonumber\\
        \!\!f_3(\rho_{11})&=0.5(2g_{10}^2a_3 -\frac{2a_3-a_1}{a_3-a_1}a_6)+a_6-1\nonumber\\
        &\;\;-\frac{g_{10}^2P_1+g_{20}^2P_2-\alpha_1g_{10}^2\rho_{11}-\alpha_2g_{20}^2\rho_{22}}{1-\alpha_1-\alpha_2},\nonumber\\
        \!\!f_4(\rho_{23})&=\left(1+\left(g_{10}\sqrt{\rho_{13}}+g_{20}\sqrt{\rho_{23}}\right)^2\right)\cdot
        \left((a_1/a_3)-(a_2/a_4)\right)\nonumber\\
        &\;\;+g_{10}\left(g_{10}+g_{20}\sqrt{\rho_{23}/\rho_{13}}\right)(a_3-a_1)\nonumber\\
        &\;\;-g_{20}\left(g_{20}+g_{10}\sqrt{\rho_{13}/\rho_{23}}\right)(a_4-a_2),\nonumber\\
        \!\!f_5(\rho_{13})&=\rho_{13}-g_{10}^{-2}\left(\sqrt{2^{f_1}-1}-g_{20}\sqrt{\rho_{23}}\right)^2\nonumber\\
        \!\!f_6&=(\alpha_1/\alpha_3)\left({\cal C}\!\!\left(g_{12}^2\rho_{11}\right)-{\cal C}\!\!\left(g_{10}^2\rho_{11}\right)\right),\nonumber\\
        \!\!f_7(\rho_{10})&=2g_{10}^2a_3-\frac{2a_3-a_1}{a_3-a_1}(1+g_{10}^2\rho_{10}+g_{20}^2\rho_{20})\nonumber\\
        &\;\;-(1+g_{10}^2\rho_{10}+g_{20}^2\rho_{20})(2^{f_6}-1),\nonumber\\
                %\end{align}
        %\begin{align}
        \!\!f_8(\rho_{13})&=b_{10}-(2^{f_1}-1)^{-1}\left(g_{10}\sqrt{\rho_{13}}+g_{20}\sqrt{\rho_{23}}\right)^2,\nonumber
        \end{align}
%\vspace{-3 mm}
%%%%%%%%%%%%%%%%%%%%%%%%%%%%%%%%%%%%%%%%%%%%%%%%%%%%%%%%%%%%%%%%%%%%%%%%%%%%%%%%%%%%%%%%%%%%%%%%%%%%%%%%%%%%%%%%%%%%%%%%%
\noindent
\begin{figure*}[ht]
\normalsize
\begin{center}
{\small TABLE II: ALGORITHM FOR MAXIMIZING THE SUM RATE.}\\
\vspace*{2mm}
\begin{tabular}{|c|c|}
\hline
Case  & Algorithm\\
\hline
$2.a$& $\!\!\!\!\!\!\!\!\!\!\!\!\!\!\!\!\!\!\!\!\!\!\!\!\!\!\!\!\!\!\!\!\!\!\!\!\!\!\!\!\!\!\!\!\!\!\!\!\!\!\!\!\!\!\!\!\!\!\!\!\!\!\!\!\!a_6^{\star}$ in (\ref{def2}) is obtained from solving $f_2(a_6)=0$ with \\
$\mu_{10}=0,$ & $\!\!\!\!\!\!\!\!\!\!\!\!\!\!\!\!\!\!\!\!\!\!\!\!\!\!\!\!\!\!\!\!\!\!\!\!\!\!\!\!\!\!\!\!\!\!\!\!\!\!\!\!\!\!\!\!\!\!\!\!\!\!\!\!\!\!\!\!\!\!\!\!\!\!\!\!
\!\!\!\!\!\!\!\!\!\!\!\!\!\!\!\!\!\!\!\!\!\!\rho_{11}^{\star}$ obtained from solving $f_3(\rho_{11})=0$, \\
$\mu_{20}=0$ &   $\rho_{22}^{\star}=(2a_7)^{-1}(b_1+\sqrt{b_1^2-4a_7b_2})-g_{20}^{-2},\;\;\;\rho_{13}^{\star}=0.5a_3-\frac{b_3 a_6^{\star}}{4g_{10}^2},\;\rho_{23}^{\star}=\frac{g_{10}^2}{g_{20}^2}\rho_{13}^{\star},$\\
 & $\!\!\!\!\!\!\!\!\!\!\!\!\rho_{10}^{\star}=\alpha_3^{-1}(P_1-\alpha_1\rho_{11}^{\star})-\rho_{13}^{\star},$ and $\;\;\;\;\;\;\;\;\;\;\;\;\rho_{20}^{\star}=\alpha_3^{-1}(P_2-\alpha_2\rho_{22}^{\star})-\rho_{23}^{\star}$\\
\hline
$2.b:$ & $\!\!\!\!\!\!\!\!\!\!\!\!\!\!\!\!\!\!\!\!\!\!\!\!\!\!\!\!\!\!\!\!\!\!\!\!\!\!\!\!\!\!\!\!\!\!\!\!\!\!\!\!\!\!\!\!\!\!\!\!\!\!\!\!\!\!\!\!\!\!
\!\!\!\!\!\!\!\!\!\!\!\!\!\!\!\!\!\!\!\!\!\!\!\!\!\!\!\!\!\!\!\!\!\!\!\!\!\!\!\!\!\!\!\!\!\!\!\!\!\!\!\!\!\!\!\!\!\!\!\!\!\!\!\!\!\!\!\!\!\!\!\!\!
\!\!\!
\rho_{10}^{\star}=\rho_{20}^{\star}=0$ and \\
$\mu_{10}>0,$ & $\!\!\!\!\!\!\!\!\!\!\!\!\!\!\!\!
\!\!\!\!\!\!\!\!\!\!\!\!\!\!\!\!\!\!\!\!\!\!\!\!\!\!\!\!\!\!\!\!\!\!\!\!\!\!\!\!\!\!\!\!\!\!\!\!\!\!\!\!\!\!\!\!\!\!\!\!\!
\rho_{13}^{\star}$ is obtained from solving $f_5(\rho_{13})=0$ with \\
$\mu_{20}>0$ & $\!\!\!\!\!\!\!\!\!\!\!\!\!\!\!\!\!\!\!\!\!\!\!\!\!\!\!\!\!\!\!\!\!\!\!\!
\!\!\!\!\!\!\!\!\!\!\!\!\!\!\!\!\!\!\!\!\!\!\!\!\!\!\!\!\!\!\!\!\!\!\!\!\!\!\!\!\!\!\!\!\!\!\!\!\!\!\!\!\!\!\!\!\!\!\!
\rho_{23}^{\star}$ obtained from solving $f_4(\rho_{23})=0,$  \\
& $\!\!\!\!\!\!\!\!\!\!\!\!\!\!\!\!\!\!\!\!\!\!\!\!\!\!\!\!\!\!\!\!\!\!\!\!\!\!\!\!\!\!\!\!\!\!\!\!\!\!\!\!\!\!\!\!\!\!
\rho_{11}^{\star}=\alpha_1^{-1}(P_1-\alpha_3\rho_{13}^{\star}),$ and  $\rho_{22}^{\star}=\alpha_2^{-1}(P_2-\alpha_3\rho_{23}^{\star}),$ \\
\hline
$3.a$& $\!\!\!\!\!\!\!\!\!\!\!\!\!\!\!\!\!\!\!\!\!\!\!\!\!\!\!\!\!\!\!
\!\!\!\!\!\!\!\!\!\!\!\!\!\!\!\!\!\!\!\!\!\!\!\!\!\!\!\!\!\!\!\!\!\!\!\!\!\!\!\!\!\!\!\!\!\!\!
\rho_{10}^{\star}$ is obtained from solving $f_7(\rho_{10})=0$ with \\
$\mu_{10}=0,$ & $\!\!\!\!\!\!\!\!\!\!\!\!\!\!\!\!\!\!\!\!\!\!\!\!\!\!\!\!\!\!\!\!\!\!\!\!\!\!\!\!\!\!\!\!\!\!\!
\!\!\!\!\!\!\!\!\!\!\!\!\!\!\!\!\!\!\!\!\!\!\!\!\!\!\!\!\!\!\!\!\!\!\!\!\!\!\!\!\!\!
\rho_{11}^{\star}=(2a_8)^{-1}(b_5+\sqrt{b_5^2-4a_8b_4})-g_{10}^{-2},$\\
$\mu_{20}=0$ &   $\!\!\!\!\!\!\!\!\!\!\!\!\!\!\!\!\!\!\!\!\!\!\!\!\!\!\!\!\!\!\!\!\!\!\!\!\!\!\!\!\!\!\!
\rho_{20}^{\star}=(1-\alpha_1)^{-1}P_2-g_{10}^2g_{20}^{-2}
\big((1-\alpha_1)^{-1}(P_1-\alpha_1)-\rho_{10}^{\star}\big),$\\
& $\!\!\!\!\!\!\!
\!\!\!\!\!\!\!\!\!\!\!\!\!\!\!\!\!\!\!\!\!\!\!\!\!\!\!\!\!\!\!\!\!\!\!\!\!\!\!\!\!\!\!
\rho_{13}^{\star}=0.5a_3-(a_3-a_1)(2a_3-a_1)(1+g_{10}^2\rho_{10}^{\star}+g_{20}^2\rho_{20}^{\star}),$\\
& $\!\!\!\!\!\!\!\!\!\!\!\!\!\!\!\!\!\!\!\!\!\!\!\!\!\!\!\!\!\!\!\!\!\!\!\!\!\!\!\!\!\!\!\!\!\!\!
\!\!\!\!\!\!\!\!\!\!\!\!\!\!\!\!\!\!\!\!\!\!\!\!\!\!\!\!\!\!\!\!\!\!\!\!\!\!\!\!\!\!\!\!\!\!\!\!\!\!\!\!\!\!\!
\!\!\!\!\!\!\!\!\!\!\!\!\!\!\!\!\!\!\!\!\!\!\!\!\!\!\!\!\!\!\!\!\!\!\!\!\!\!\!\!\!\!\!\!\!\!\!\!
\rho_{23}^{\star}=g_{10}^2g_{20}^{-2}\rho_{13}^{\star}.$\\
\hline
$3.b: $ & $\!\!\!\!\!\!\!\!\!\!\!\!\!\!\!\!\!\!\!\!\!\!\!\!\!\!\!\!\!\!\!\!\!\!\!\!\!\!\!\!\!\!\!\!\!\!\!\!\!\!\!\!\!\!\!
\!\!\!\!\!\!\!\!\!\!\!\!\!\!\!\!\!\!\!\!\!\!\!\!\!\!\!\!
\!\!\!\!\!\!\!\!\!\!\!\!\!\!\!\!\!\!\!\!\!\!\!\!\!\!\!\!\!\!\!\!\!\!\!\!\!\!\!\!\!\!\!\!\!\!\!\!\!\!\!\!\!\!\!\!\!\!\!\!\!\!\!\!\!\!\!\!\!\!\!\!\!\!\!\!\!\!\!\!
\rho_{10}^{\star}=0$ and \\
 $\mu_{10}>0,$ & $
 \!\!\!\!\!\!\!\!\!\!\!\!\!\!\!\!\!\!\!\!\!\!\!\!\!\!\!\!\!\!\!\!\!\!\!\!\!\!\!\!\!\!\!\!\!\!\!\!\!\!\!\!\!\!\!\!\!\!\!\!\!\!\!\!\!\!\!\!\!\!\!\!\!\!\!\!\!\!\!
 \rho_{13}^{\star}$ is obtained from solving $f_8(\rho_{13})=0$ with \\
$\mu_{20}=0$ & $\!\!\!\!\!\!\!\!\!\!\!\!\!\!\!\!\!\!\!\!\!\!
\!\!\!\!\!\!\!\!\!\!\!\!\!\!\!\!\!\!\!\!\!\!\!\!\!\!\!\!\!\!\!\!\!\!\!\!\!\!\!\!\!\!\!\!\!\!\!\!\!\!\!\!\!\!\!\!\!\!\!\!\!\!\!\!\!\!\!\!\!\!\!\!\!\!\!\!\!
\rho_{23}^{\star}=\left((2b_6)^{-1}(b_8+\sqrt{b_8^2-4b_6b_7})\right)^2$  \\
& $\!\!\!\!\!\!\!\!\!\!\!\!\!\!\!\!\!\!\!\!\!\!\!\!\!\!\!\!\!\!\!\!\!\!\!\!\!\!\!\!
\!\!\!\!\!\!\!\!\!\!
\!\!\!\!\!\!\!\!\!\!\!\!\!\!\!\!\!\!\!\!\!\!\!\!\!\!\!\!\!\!\!\!\!\!\!\!\!\!\!\!\!\!\!\!\!\!\!\!\!\!\!\!\!\!\!\!\!\!\!\!\!\!\!\!\!\!\!\!\!\!\!\!\!\!\!\!
\rho_{20}^{\star}=(1-\alpha_1)^{-1}P_2-\rho_{23}^{\star},$\\
&$\!\!\!\!\!\!\!\!\!\!\!\!\!\!\!\!\!\!\!\!\!\!\!\!\!\!\!\!\!\!\!\!\!\!\!\!\!\!\!\!\!\!
\!\!\!\!\!\!\!\!\!\!\!\!\!\!\!\!\!\!\!\!\!\!\!\!\!\!\!\!\!\!\!\!\!\!\!\!\!\!\!\!\!\!\!\!\!\!\!\!\!\!\!\!\!\!\!\!\!\!\!\!\!\!\!\!\!\!\!\!\!\!\!\!\!\!\!\!\!\!\!
\rho_{11}^{\star}=\alpha_1^{-1}(P-(1-\alpha_1))\rho_{13}^{\star}$\\
\hline
$4$& similar to Case $3$ but interchanging the power allocation between two users.\\
\hline
\end{tabular}
%\\
%\vspace*{2mm}
%{\small Table II: Algorithm for maximizing the sum rate.}\\
%\hline
\end{center}
\vspace*{-6mm}
\end{figure*}
%\vspace{-1 mm}
%%%%%%%%%%%%%%%%%%%%%%%%
%%%%%%%%%%%%%%%%%%%%%%%%%%%%%%%%%%%%%%%%%%%%%%%%%%%%%%%%%%%%%%%%%%%%%%%%%%%%%%%%%%%%%%%%%%%%
%%%%%%%%%%%%%%%%%%%%%%%%%%%%%%%%%%%%%%%%%%%%%%%%%%%%%%%%%%%%%%%%%%%%%%%%%%%%%%%%%%%%%%%%%%%%%%%%%%%%%%%%%%%%%%%%%
%%%%%%%%%%%%%%%%%%%%%%%%%%%%%%%%%%%%%%%%%%%%%%%%%%%%%%%%%%%%%%%%%%%%%%%%%%%%%%%%%%%%%%%%%%%%%%%%%%%%%%%%%%
%%%%%%%%%%%%%%%%%%%%%%%%%%%%%%%%%%%%%%%%%%%%%%%%%%%%5555
\vspace{-8 mm}
\subsection{Optimal Schemes}
As in the individual rate case, for each channel configuration, different optimal scheme for maximizing the sum rate
can result from the proposed
scheme. Theorem $3$ summarizes them.%se schemes.
%\vspace{-2 mm}
\begin{thm}
The optimal scheme for maximizing the sum rate in a half-duplex MAC-TC is
\begin{itemize}
  \item
  \textbf{Case $1$:} Both users send their messages during the whole transmission time without cooperation as in the classical MAC.
     \item
  \textbf{Case $2.a$:} For a given pair $(\alpha_1,\alpha_2)$, both users perform PDF relaying where in
  \begin{itemize}
    \item The $1^{\text{st}}$ phase:  user $1$ sends $w_{12}$, user $2$ decodes it.
    \item The $2^{\text{nd}}$ phase:  user $2$ sends $w_{21}$, user $1$ decodes it.
    \item The $3^{\text{rd}}$ phase: user $1$ sends $(w_{10},w_{12},w_{21})$ and user $2$ sends $(w_{20},w_{12},w_{21})$.
  \end{itemize}
   \item
  \textbf{Case $2.b$:} For a given pair $(\alpha_1,\alpha_2)$, both users perform DF relaying where in
  \begin{itemize}
    \item The $1^{\text{st}}$ phase: user $1$ sends $w_{1}$, user $2$ decodes it.
    \item The $2^{\text{nd}}$ phase: user $2$ sends $w_{2}$, user $1$ decodes it.
    \item The $3^{\text{rd}}$ phase: each user sends both messages $(w_1,w_2)$.
  \end{itemize}
  \item
  \textbf{Case $3.a$:} For a given $\alpha_1$ (here $\alpha_2^{\star} = 0$), user $2$ performs direct transmission and user $1$ performs PDF relaying where in
  \begin{itemize}
    \item The $1^{\text{st}}$ phase: user $1$ sends $w_{12}$, user $2$ decodes it.
    \item The $3^{\text{rd}}$ phase: user $1$ sends $(w_{10},w_{12})$ and user $2$ sends $(w_2,w_{12})$.
  \end{itemize}
  \item
  \textbf{Case $3.b$:} For a given $\alpha_1$ (here $\alpha_2^{\star} = 0$), user $2$ performs direct transmission and user $1$ performs DF relaying where in
  \begin{itemize}
    \item The $1^{\text{st}}$ phase: user $1$ sends $w_{1}$, user $2$ decodes it.
    \item The $3^{\text{rd}}$ phase: user $1$ sends $w_{1}$ and user $2$ sends $(w_2,w_1)$.
  \end{itemize}
  \item \textbf{Case $4$}: This case is simply the opposite of Case $3$.
\end{itemize}
In all cases, the destination jointly decodes all the messages only at the end of the $3^{\text{rd}}$ phase using signals received in all three phases.
\end{thm}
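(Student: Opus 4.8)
The plan is to obtain Theorem $3$ as a direct interpretation of the case analysis in Section \ref{caseS} together with the closed-form optimal power allocations collected in Table II: once the optimal pattern of active (strictly positive) and inactive (zero) powers is known for each case, the transmit signals in (\ref{sigtr}) collapse to the signaling described in the theorem, and the decoding constraints (\ref{spr12})--(\ref{sprd}) identify which user decodes which message part. No new optimization is needed; the work is to translate the power values into transmission actions.

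First I would set up the dictionary between the power variables and the physical scheme. A strictly positive $\rho_{10}^\star$ (equivalently $\mu_{10}^\star=0$) means user $1$ spends power on its private codeword $V_1(w_{10})$, so the split $w_1=(w_{10},w_{12})$ is genuinely partial; conversely $\rho_{10}^\star=0$ forces $w_1=w_{12}$, i.e.\ the whole message of user $1$ is cooperative. The same reading applies to $\rho_{20}^\star$ and user $2$. A vanishing dedicated phase, $\alpha_i^\star=0$ with $\rho_{ii}^\star=0$, means user $i$ transmits nothing for the other user to decode and therefore sends only its own message directly to the destination. Positive $\rho_{13}^\star,\rho_{23}^\star$ signify coherent transmission of $(w_{12},w_{21})$ by both users in phase $3$, which is exactly the beamforming captured by $\zeta$ in (\ref{jowz}).

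With this dictionary the cases follow mechanically. In Case $1$, substituting (\ref{moe1}) leaves only the private powers $\rho_{10}^\star,\rho_{20}^\star$ nonzero (equal to $P_1,P_2$) with $\alpha_1^\star=\alpha_2^\star=0$, so both users transmit only private messages over the full block: the classical MAC. In Case $2.a$ the subcase condition $\mu_{10}^\star=\mu_{20}^\star=0$ gives $\rho_{10}^\star,\rho_{20}^\star>0$, while Table II yields $\rho_{13}^\star,\rho_{23}^\star>0$; hence both users split their messages and relay the cooperative parts, i.e.\ mutual PDF. In Case $2.b$ the defining condition $\rho_{10}^\star=\rho_{20}^\star=0$ removes the private parts, so $w_1=w_{12}$ and $w_2=w_{21}$ are fully exchanged and cooperatively sent: DF. For Cases $3.a$ and $3.b$ the analysis already fixes $\alpha_2^\star=0$ and $\rho_{22}^\star=0$, pinning user $2$ to direct transmission of $w_2$ while still relaying $w_{12}$ in phase $3$ (since $\rho_{23}^\star>0$); whether user $1$ performs PDF or DF is then decided exactly by $\rho_{10}^\star>0$ (Case $3.a$) or $\rho_{10}^\star=0$ (Case $3.b$). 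Case $4$ is identical after interchanging the roles of the two users, as noted in Table II.

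I expect the main subtlety, rather than a genuine analytical obstacle, to be bookkeeping: verifying for each case that the claimed sign pattern of the powers is internally consistent with the active/inactive rate constraints established in Section \ref{caseS} (for instance, that $S_R^\star=S_1=S_4$ in Case $2$ is compatible with $\rho_{10}^\star,\rho_{20}^\star>0$ in $2.a$ and with $\rho_{10}^\star=\rho_{20}^\star=0$ in $2.b$), and confirming that in every reduced scheme the destination's joint ML decoder over $(Y_1,Y_2,Y_3)$ still achieves the stated sum rate. Since the power allocations themselves are derived in Appendix C, these checks are routine once the dictionary above is in place.
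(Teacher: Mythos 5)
Your proposal is correct and follows essentially the same route as the paper: the paper's own proof of Theorem~3 simply cites the case analysis of Section~\ref{caseS} and the power allocations of Table~II (derived in Appendix~C), which is exactly your ``dictionary'' translation of zero/nonzero optimal powers and phase durations into the transmission actions listed in the theorem. Your explicit mapping (e.g., $\rho_{10}^{\star}>0$ versus $\rho_{10}^{\star}=0$ distinguishing PDF from DF, and $\alpha_2^{\star}=0$, $\rho_{22}^{\star}=0$ forcing direct transmission by user~$2$) is just a more detailed spelling-out of what the paper leaves implicit.
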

\begin{proof}
These optimal schemes are obtained from the cases defined in Section \ref{caseS} and the algorithm in Table II.
\end{proof}
%\vspace{-3 mm}
Theorem $3$ covers all possible subschemes for maximizing the sum rate of the MAC-TC with partial decode-forward relaying. The optimal transmission scheme at each user varies from direct transmission to PDF to DF relaying as the link to the other user respectively varies from weaker to slightly to significantly stronger than the link between this user and the destination. The following conclusions can be drawn from Theorem $3$. First, direct transmission from both users is preferred if each inter-user link is weaker than the link from the respective user to the destination. Second, when the inter-user links are stronger than the user-destination links,  DF or PDF scheme from both users is optimal depending on whether the inter-user links are slightly or significantly stronger than the user-destination links. Third, when user $1$ has a stronger link to user $2$ than to the destination while user $2$ has a weaker link to user $1$ than to the destination, user $2$ chooses direct transmission while user $1$ performs PDF or DF relaying depending on whether its link to user $2$ is slightly  or significantly stronger than its link to the destination. More analysis for the existence of each subscheme is given Section \ref{sec:numresult}.

\indent For practical application pertaining to the uplink in cellular networks, this theorem allows the designer to determine the best scheme for each channel configuration. Especially when the two mobiles are close to each other such that they have very strong inter-link qualities, the theorem suggests the use of simple decode-forward for maximizing the sum rate transmission from these two mobiles to the base station.\\
\indent Extension to more than two users is possible and a coding scheme for $m$-user MAC-TC is given in \cite{haivu3}. Although the power allocation problem is similar to the two-user case, it will be more complicated because of the increase in the number
of cases and power parameters to be optimized. Nevertheless, the optimal scheme at each user is expected to be similar to the two-user case in that it moves from direct
transmission to partial DF to full DF relaying as the inter-user links move from weaker to slightly to significantly stronger than the user-destination links.
\vspace{-2 mm}
\subsection{Special Case of Symmetric Channels}
Symmetric channels can occur when quantizing the channel coefficients as done in practice, where the destination sends feedback to the two users about quantized  channel coefficients. The destination employs round-down quantization to ensure that the rates generated from the quantized coefficients are achievable. The quantized coefficients will be the same if the actual coefficients are close. Furthermore, the inter-user links are the same based on reciprocity either with or without quantization. Hence, symmetric channels can occur for a non-negligible range of channel links in practice. For example, in WiFi networks, the access point and computers usually have fixed locations during use and the surrounding environment is almost stable such that distance becomes the main factor affecting link qualities. In such networks, symmetric channel occurs if two computers have approximately similar distances
to the access point because of quantized channel coefficients.

In this section, we briefly analyze the optimal power allocation for the special case of symmetric channels. The optimization problem is simpler with more closed-form results.
In \cite{haija1}, we have derived the optimal parameters for a slightly different coding scheme in which each user splits its message into three parts; however,
we show in \cite{haivu3} that it is equivalent to the scheme
%this scheme have the same rate region as the scheme
 considered in this paper.

Consider a symmetric channel with $g_{10}=g_{20},$ $g_{12}=g_{21},$ and $P_1=P_2=P$.
 %Moreover, the considered scheme here is simpler and easier to optimize.
Then,  $\rho_{10}^{\star}=\rho_{20}^{\star},$ $\rho_{11}^{\star}=\rho_{22}^{\star},$ $\rho_{13}^{\star}=\rho_{23}^{\star},$  and
 $\alpha_1^{\star}=\alpha_2^{\star}=\alpha^{\star}$. For the case $g_{10}\geq g_{12}$, the optimal parameters are that of the classical MAC (i.e.
 no cooperation). %\\
 For the case $g_{12}> g_{10}$, we have KKT conditions similar to those in (\ref{KKTsum1})--(\ref{KKTsum8}). The optimal parameters  can be obtained for a fixed
 $\alpha$ as in Table III. The proof is similar to that for the sum rate in Appendix C.
 %\begin{proof}
%Similar to the proofs of the individual or sum rate in Appendix B and C
%\end{proof}
We can then vary $\alpha$, find the corresponding maximum sum rate for each $\alpha$ and choose the optimal $\alpha^{\star}$ that
corresponds to the maximum rate overall.
\vspace{2 mm}
\newsavebox\unistrainb
\begin{lrbox}{\unistrainb}
  \begin{minipage}{0.9\textwidth}
  \vspace{-2 mm}
        \begin{align}\label{intersy}
        a_1\;\;&\text{and}\;\; a_2\;\; \text{are the as in (\ref{def1}).}\nonumber\\
        a_3=&\;4\rho_{13},\;\;\;\;\;\;\;\;\;\;\;\;\;\;\;\;\;\;\;\;\;\;\;\;\;\;\;\;\;\;\;b_1=(2a_2-a_1)^{-1}((a_2-a_1)(2a_2-a_3)),\; \;\;\nonumber\\
        b_2=&\;(1-\alpha)a_1+g_{10}^{-2}+2P,\;\;\;\;  \mu_3=0.25a_1a_3(1-2\alpha)+0.5a_1\left(2P+g_{10}^{-2}\right).\nonumber\\
        f_1=&\;2\alpha {\cal C}\!\!\left(g_{10}^2\rho_{11}\right)-2\alpha {\cal C}\!\!\left(g_{12}^2\rho_{11}\right)
        +(1-2\alpha)\log\left(g_{10}^2(b_1+a_3)\right)\nonumber\\
        f_2(\rho_{13})=&\;(1-2\alpha)^{-1}(P-\alpha \rho_{11})-\rho_{13}-(2g_{10}^2)^{-1}\left(2^{\frac{f_1}{1-2\alpha}}-1\right),\nonumber\\
        f_3=&\;\frac{1-2\alpha}{2\alpha}\log\left(1+4g_{10}^2\rho_{13}\right).\nonumber\\
        f_4(\rho_{13})=&\;\frac{1}{\alpha}(P-(1-2\alpha)\rho_{13})-\frac{1}{2g_{12}^2}\left(2^{f_3}-1\right).
        \end{align}
        \vspace{1 mm}
  \end{minipage}
\end{lrbox}
%%%%%%%%%%%%%%%%%%%%%%%%%%%%%%%%%%%%%%%%%%%%%%%%%%%%%%%%%%%%%%%%%%%%%%%%%%%%%%%%%%%%%%%%%%%%%%%%%%%%%%%%%%%%%%%%%%%%%%%%%
\noindent
\begin{figure*}[ht]
\normalsize
\begin{center}
{\small TABLE III: ALGORITHM FOR MAXIMIZING THE SUM RATE FOR A SYMMETRIC CHANNEL.}\\
\vspace*{2mm}
\begin{tabular}{|c|c|}
\hline
\multicolumn{2}{|c|}{Definitions}\\
\hline
\multicolumn{2}{|c|}{\usebox{\unistrainb}}\\
\hline
Case  & Algorithm\\
\hline
$2.a:$& $\rho_{13}^{\star}$ is obtained from solving $f_2(\rho_{13})=0$ with\\
$\mu_{10}=0$ &   $\!\!\!\!\!\!\!\!\!\!\!\!\!\!\!\!\rho_{11}^{\star}=0.5\left(b_2+\sqrt{b_2^2-4b_3}\right)-g_{10}^{-2}$ and \\
& $\!\!\!\!\!\!\!\!\!\!\!\!\!\!\!\!\!\!\!\!\!\!\!\!\!\!\rho_{10}^{\star}=(1-2\alpha)^{-1}(P-\alpha \rho_{11}^{\star})-\rho_{13}^{\star}$\\
\hline
$2.b:$ & $\!\!\!\!\!\!\!\!\!\!\!\!\!\!\!\!\!\!\!\!\!\!\!\!\!\!\!\!\!\!\!\!\!\!\!\!\!\!\!\!\!\!\!\!\!\!\!\!\!\!\!\!\!\!\!\!\!\!\!\!\!\!\!\!
\!\!\!\!\!\!\!\!\!\!\!\!\!\!\!\!\!\!\!\!
\rho_{10}^{\star}=0$ and \\
$\mu_{10}>0$ &$\rho_{13}^{\star}$ is obtained from solving $f_4(\rho_{13})=0$ with \\
 & $\!\!\!\!\!\!\!\!\!\!\!\!\!\!\!\!\!\!\!\!\!\!\!\!\!\!\!\!\!\!\!\!\!\!\!\!\!\!\!\!\rho_{11}^{\star}=\alpha^{-1}(P-(1-2\alpha)\rho_{13}^{\star}).$\\
\hline
\end{tabular}
%\\
%\vspace*{2mm}
%{\small Table III: Algorithm for maximizing the sum rate for a symmetric channel.}\\
%\hline
\end{center}
\vspace*{-4mm}
\end{figure*}
%\vspace{-2 mm}
%%%%%%%%%%%%%%%%%%%%%%%%%%%%%%%%%%%%%%%%%%%%%%%%%%%%%%%%%%%%%%%%%%%%%%%%%%%%%%%%%%%%%%%%%%%%%%%%%%%%%%%%%%%%%%%%%%%%%%%%%%%%%%%%555
\vspace{-2 mm}
\subsection{Maximum Gain with Respect to the Classical MAC}
We now analyze the maximum gain that can be obtained by the proposed cooperative scheme  compared with the classical non-cooperative MAC. To find the maximum gain,
 we consider the asymptote where  $(g_{12},g_{21})\rightarrow\infty$ and $(P_1,P_2)\rightarrow\infty$ and obtain the following theorem:
%\vspace{-3 mm}
\begin{thm}
The maximum gain obtained by the proposed cooperative scheme compared with the classical MAC when $(g_{12},g_{21})\rightarrow\infty$
and $(P_1,P_2)\rightarrow\infty$ are
\begin{align}\label{gain1}
\Delta(R_1)&=\;{\cal C}\!\left(\frac{g_{20}^2+2g_{10}g_{20}}{g_{10}^2}\right),\nonumber\\
\Delta(R_2)&=\;{\cal C}\!\left(\frac{g_{10}^2+2g_{10}g_{20}}{g_{20}^2}\right),\nonumber\\
%\end{align}
%\begin{align}
\Delta(R_1+R_2)&=\;{\cal C}\!\left(\frac{2g_{10}g_{20}}{g_{10}^2+g_{20}^2}\right).
\end{align}
Furthermore, for symmetric channels with $g_{10}=g_{20}$, the maximum gain is equal to $2$ bps/Hz for the individual
rate and $1$ bps/Hz for the sum rate.
%%\vspace{-3 mm}
\end{thm}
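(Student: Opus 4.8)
The plan is to show that the double limit $(g_{12},g_{21})\to\infty$, $(P_1,P_2)\to\infty$ drives the optimizer into the full decode-forward (full cooperation) regime, and then to compare the resulting beamforming rate against the classical MAC rate. First I would argue that as the inter-user gains grow, the constraints $I_1=\alpha_1\log(1+g_{12}^2\rho_{11})$ and $I_2=\alpha_2\log(1+g_{21}^2\rho_{22})$ in (\ref{spr12})--(\ref{spr21}) can support any cooperative rate with vanishing exchange durations, so $\alpha_1,\alpha_2\to0$ and $\alpha_3\to1$. Because the cross term $2g_{10}g_{20}\sqrt{\rho_{13}\rho_{23}}$ in $\zeta$ of (\ref{jowz}) strictly raises the effective received power above independent transmission, it is optimal to route all power through the cooperative codeword $U_3$, so the private powers vanish: $\rho_{10}^{\star}=\rho_{20}^{\star}=0$. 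This places the scheme in Case $2.b$ of Theorems $2$ and $3$ (full DF).

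With this reduction, the second step reads off the limiting cooperative rates. From the power constraints (\ref{powcsch2}) with $\alpha_1,\alpha_2\to0$ and $\rho_{10}=\rho_{20}=0$, we have $\rho_{13}\to P_1$ and $\rho_{23}\to P_2$, hence $\zeta\to(g_{10}\sqrt{P_1}+g_{20}\sqrt{P_2})^2$. For the individual rate, user $2$ relays $w_1$ with full power and the whole message is carried by the beamformed $U_3$, giving $R_1\to{\cal C}((g_{10}\sqrt{P_1}+g_{20}\sqrt{P_2})^2)$; symmetrically $R_2\to{\cal C}((g_{10}\sqrt{P_1}+g_{20}\sqrt{P_2})^2)$ when user $1$ relays. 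For the sum rate, the single codeword $U_3$ carries the pair $(w_1,w_2)$, so constraint $S_4$ in (\ref{th1Grrsimp}) yields $R_1+R_2\to{\cal C}((g_{10}\sqrt{P_1}+g_{20}\sqrt{P_2})^2)$ as well.

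Third, I would subtract the classical MAC rates ${\cal C}(g_{10}^2P_1)$, ${\cal C}(g_{20}^2P_2)$ and ${\cal C}(g_{10}^2P_1+g_{20}^2P_2)$, specialize to equal power $P_1=P_2=P$, and let $P\to\infty$, so each gain becomes a ratio of leading coefficients. For instance,
\begin{align*}
\Delta(R_1)=\lim_{P\to\infty}\log\frac{1+(g_{10}+g_{20})^2P}{1+g_{10}^2P}
=\log\frac{(g_{10}+g_{20})^2}{g_{10}^2}={\cal C}\!\left(\frac{g_{20}^2+2g_{10}g_{20}}{g_{10}^2}\right),
\end{align*}
and analogously for $\Delta(R_2)$ and for $\Delta(R_1+R_2)$, where the denominator becomes $g_{10}^2+g_{20}^2$. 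Setting $g_{10}=g_{20}$ then gives $\log4=2$ bps/Hz for the individual rates and $\log2=1$ bps/Hz for the sum rate.

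The main obstacle I anticipate is rigorously justifying the first step, i.e.\ that the optimizer truly collapses to $\alpha_1,\alpha_2\to0$ with zero private power in the double limit. This amounts to showing that no fixed positive exchange duration can match the beamforming rate as $g_{12},g_{21}\to\infty$, and that partial DF is strictly dominated by full DF in this asymptote; I would establish both from the monotonicity of the rate expressions in (\ref{th1Grrsimp}) and the strictly positive cross term in $\zeta$, rather than by re-solving the KKT system. The remaining work is routine simplification of logarithms.
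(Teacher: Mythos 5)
Your proposal is correct and follows essentially the same route as the paper: identify the beamforming constraint $S_4$ as the only binding one in the limit $(g_{12},g_{21})\to\infty$, maximize it by putting all power and time into the coherent third phase (so $\zeta\to(g_{10}\sqrt{P_1}+g_{20}\sqrt{P_2})^2$), subtract the classical MAC rates, and let the powers grow. The paper is equally brief about the collapse to $\alpha_3\to1$, $\rho_{10}^{\star}=\rho_{20}^{\star}=0$ that you flag as the delicate step, so your extra justification via the cross term in $\zeta$ only strengthens the argument.
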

%\vspace{-5 mm}
\begin{proof}
In the asymptote as $(g_{12},g_{21})\rightarrow\infty$ in (\ref{th1Grrsimp}), $S_4$ becomes the only active constraint. The maximum of $S_4$ is achieved with $\alpha_3\rightarrow 1$, $\rho_1\rightarrow P_1$, and $\rho_2\rightarrow P_2$. Comparing $S_4^{\max}$ with the individual and sum rates for the classical MAC \cite{hsce6120}, we obtain the following formula:
\begin{align}\label{gain2}
\!\!\!\!S_4^{\max}-R_1=&\;{\cal C}\!\!\left(\frac{g_{20}^2P_2+2g_{10}g_{20}\sqrt{P_1P_2}}{1+g_{10}^2P_1}\right),\nonumber\\
\!\!\!\!S_4^{\max}-R_2=&\;{\cal C}\!\!\left(\frac{g_{10}^2P_1+2g_{10}g_{20}\sqrt{P_1P_2}}{1+g_{20}^2P_2}\right)\nonumber\\
\!\!\!\!S_4^{\max}\!-(R_1+R_2)\!=&\;{\cal C}\!\!\left(\frac{2g_{10}g_{20}\sqrt{P_1P_2}}{1+g_{10}^2P_1+g_{20}^2P_2}\right)\!\!.
\end{align}
\noindent Letting $(P_1,P_2)\rightarrow \infty$, we obtain (\ref{gain1}).
\end{proof}
We will see from simulation results later that these asymptotic gains are closely reached even at reasonable link gains and finite transmit powers.
%\vspace{-2 mm}
%%%%%%%%%%%%%%%%%%%%%%%%%%%%%%%%%%%%%%%%%%%%%%%%%%%%%%%%%%%%%%%%%%%%%%%%%%%%%%%%%%%%%%%%%%%%%%%%%%%%%%%%%%%%%%%%
%%%%%%%%%%%%%%%%%%%%%%%%%%%%%%%%%%%%%%%%%%%%%%%%%%%%%%%%%%%%%%%%%%%%%%%%%%%%%%%%%%%%%%%%%%%%%%%%%%%%%%%%%%%%%
\section{Optimal Phase Duration}
We have discussed algorithms for optimizing the power allocation to maximize the individual and the sum rates with fixed phase durations. These phase durations are indeed often fixed in practical applications, for example in GSM systems (European standard for cellular networks).
In GSM, the available band for either uplink or downlink communication is $25$MHz. This band is divided into $125$ sub-channels of $200$KHz each using frequency division. Each sub-channel is shared by $8$ users using time division with $526.92\mu s$ for each time slot or phase. The proposed algorithms can be applied there directly.

When the optimal phase duration is of interest, we can use a numerical search method. In this section, we discuss a simple grid search and its impact on the  running time  as the number of users increases. We also propose a simple and fast interpolation method that can achieve the optimal phase durations with an accuracy of more than $90\%$. Numerical search for the optimal phase durations is necessary since the optimization problem for phase duration is non-convex.
%%%%%%%%%%%%%%%%%%%
\subsection{Grid Search and Lookup Table}
In grid search, the optimal phase durations are obtained using exhaustive search over the entire range of $\alpha_1\geq 0, \alpha_2\geq 0,$ and $\alpha_1+\alpha_2\leq1$. The grid search is one dimensional $(\alpha_1\in[0,1])$ for the individual rate $(R_1)$ and two dimensional $(\alpha_1,\alpha_2)$ for the sum rate. Since the running time  for each set of phase durations is usually small, such grid search can be efficient for obtaining the accurate optimal phase duration. For practical implementation in channels that vary slightly, these optimal phase durations can be pre-computed offline for each set of channel gains, then stored in a table and the algorithm only needs
to perform table lookup at run time.

Fast varying channels may require  a large table to store the optimization results for every channel configuration. Alternatively, results for  a set of sampled channel gains can be stored. Then,  if the actual channel gain is in between stored values, a grid search for optimal phase durations can be performed in between
the two stored phase values instead of the whole range $[0,1],$ which significantly reduces the searching time.

For extension to the $m-$user case, the number of cases resulting from individual and sum rate optimizations is equal to $m+1$ and $2^{m}$, respectively. Numerical search for the optimal $\alpha_1^{\star}$ for the individual rate will consume similar time to that required for the two user case since it is still a one-dimensional search $(\alpha_2^{\star}=\alpha_3^{\star}=...=\alpha_{m-1}^{\star}=0)$. However, finding the optimal phases  $\alpha_1^{\star},\;\alpha_2^{\star},...,$
 and $\alpha_{m}^{\star}$ for the sum rate
 becomes an $m-$dimentional search. Next, we propose a simple method that can approximately achieves the optimal phase durations.% in the next section.

\subsection{Polynomial Approximation}\label{aprtec}
Although the optimization problem for phase durations is non-convex, it is observed through extensive numerical examples to have a unique maximum
 in the range $[0,1]$ (see Figure \ref{fig:conf} for examples of the individual and sum rates versus phase duration). In addition, the curves around the optimal phase durations
suggest that they can be approximated by quadratic functions. Therefore, we can use interpolation technique for the individual and the sum rates as follows.
\subsubsection{Individual Rate Interpolation}
After quantizing the interval of $\alpha_1$ $([0,1])$ into $L\geq3$ points, calculate the optimal power allocations for each quantized
value of $\alpha_1$. Choose the value $\alpha_{1,1}$ that leads to the maximum individual rate $R_{1,1}$ and two other
points $(\alpha_{1,2},R_{1,2})$ and $(\alpha_{1,3},R_{1,3})$ that directly surround $(\alpha_{1,1},R_{1,1})$.  Next, use these  points
%with the interpolation technique
to express $R_1$ as quadratic function of $\alpha_1$ \cite{numE}. The approximated optimal $\alpha_1^\diamond$ is obtained from the
derivative of this function.
%%%%%%%%%%%%%%%%%%%%%%%%%%%%%%%%%%%%%%%%%%%%%%%%%%%%%%%%%%%%%%%%%%%%%%%%
\subsubsection{Sum Rate Interpolation}
Similarly, quantize the interval of $\alpha_1$ $([0,1])$ into $L\geq 3$ points. For each value of $\alpha_1$,
quantize the interval of $\alpha_2$ $([0,1-\alpha_1])$ into $T\geq3$ points. For each pair $(\alpha_1,\alpha_2)$,
calculate the optimal power allocations. Choose the point $(\alpha_{1,1},\alpha_{2,1}, S_{R,1})$ that has the maximum sum rate and
four other points that surround it. Use these five points
 %with the bivariate quadratic interpolation technique
 to express $S_R$ as a bivariate quadratic function of $(\alpha_1,\alpha_2)$. The approximated optimal pair $(\alpha_1^\diamond,\alpha_2^\diamond)$  is then obtained from the derivatives of the interpolated function \cite{numE}.

 In Section \ref{sec:numresult},
numerical results comparing approximated and optimal phase durations
show that the interpolated values approximate  the exact values within a margin of $6\%$ for the individual rate and $10\%$ for the sum rate.
%%%%%%%%%%%%%%%%%%%%%%%%%%%%%%%%%%%%%%%%%%%%%
%%%%%%%%%%%%%%%%%%%%%%%%%%%%%%%%%%%%%%%%%%%%%%%%%%%%%%%%%%%%%%%%%%%%%%%%%%%%%%%%%%%%%%%%%%%%%%%%%%%
%%%%%%%%%%%%%%%%%%%%%%%%%%%%%%%%%%%%%%%%%%%%%%%%%%%%%%%%%%%%%%%%%%%%%%%%%%%%%%%%%%%%%%%%%%%%%%%%%%%%%%%%%%%%%%%%%
%%%%%%%%%%%%%%%%%%%%%%%%%%%%%%%%%%%%%%%
\section{Numerical Results}\label{sec:numresult}
In this section, we first compare the achievable rate region of our scheme with an outer bound in Figure \ref{fig:asyro} to show that it is near capacity-achieving. Then, in Figures \ref{fig:ratefh}--\ref{fig:ratsum}, we optimize our scheme for a network on a $2D$ plane with fixed user locations while letting the destination move on the plane and applying a channel gain model with pathloss only. These figures show the geometrical regions of optimal schemes for either the individual or the sum rate
 as well as the maximum rate. Figure \ref{fig:conf} illustrate the behavior of the maximum individual and sum rates versus phase durations. Next, Figure \ref{fig:phstar} illustrates the exact and approximate optimal phase durations for both the individual and sum rates obtained respectively by the  grid search and interpolation technique. Figure \ref{fig:ratsumlin} illustrates the gain in sum rate for symmetric channels. In all these figures, we normalize the transmission time and the channel bandwidth.

Figure \ref{fig:asyro} compares between the achievable rate regions of the proposed scheme, the classical MAC and an outer bound as derived in \cite{haivu3}. The outer bound consists of all rate pairs $(R_1,R_2)$ satisfying (\ref{th1Grrsimp}) but replacing $g_{12}^2$ by $g_{10}^2+g_{12}^2$ and $g_{21}^2$ by $g_{20}^2+g_{21}^2$. Results are plotted for different values of $g_{10}$ and $g_{20}$ while fixing $g_{12}=g_{21}=5$ and $P_1=P_2=2$. This figure shows that our scheme is near capacity-achieving: it is close to the outer bound especially when the ratios  $g_{12}/g_{10}$ and $g_{21}/g_{20}$ are high.
\noindent
\begin{figure}[t]
    \begin{center}
    \includegraphics[width=0.45\textwidth,  height=62mm]{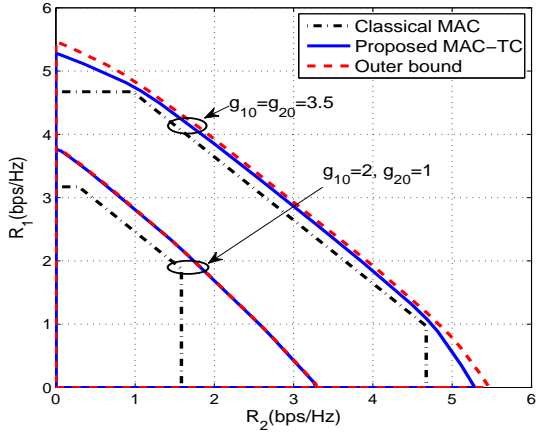}%{fig2Tcom.eps}%{FIG3JOU2.eps} height=62mm
    \caption{Achievable rate regions and outer bounds for the asymmetric half-duplex MAC-TC with $g_{12}=g_{21}=5$.}
    \label{fig:asyro}
    \end{center}
\vspace*{-5mm}
\end{figure}
Figure \ref{fig:ratefh} shows the geometrical regions of optimal schemes for maximizing  the individual rate as described in Theorem $2$. We fix $\alpha_1 = 0.5$,  fix the locations of user $1$ and user $2$ at points $(-0.5,0)$ and $(0.5,0)$, respectively, and allow the destination to be anywhere on the plane. We use a pathloss-only model in which  each channel gain $g_{ij}$ is related to the distance by $g_{ij}=d_{ij}^{-\gamma/2}$ where $\gamma=2.4$. Let $d_{12},$ $d_{10}$, and $d_{20}$  respectively be the distances between the two users, user $1$ and the destination, and user $2$ and the destination.  Results show that the optimal scheme is direct transmission if $d_{10}<d_{12}$, two-hop transmission if $d_{10}>2d_{12}$ and $d_{20}$ is slightly bigger than $d_{12}$, DF with message repetition if $d_{10}>>d_{12}$ and $d_{20}>d_{12}$, and PDF with or without message repetition in the remaining two regions.
Here we choose to fix $\alpha_1$ to simplify the computation, but similar results can also be obtained with the optimal $\alpha_1^{\star}$.
\noindent
\begin{figure}[t]
    \begin{center}
    \includegraphics[width=0.45\textwidth, height=62mm]{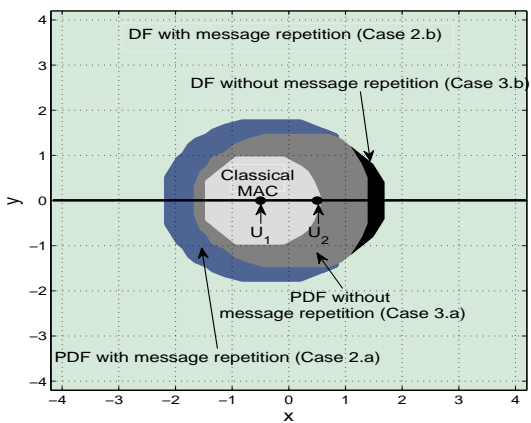}
    \caption{Geometric regions of optimal schemes for maximizing the individual rate per destination location with $\alpha_1=0.5$, $\gamma=2.4$, ($U_{1}=$ user $1$, $U_{2}=$ user $2$). The different cases correspond to those in Theorem 2.} \label{fig:ratefh}
    \end{center}
\vspace*{-5mm}
\end{figure}
\noindent
\begin{figure}[!t]
    \begin{center}
    \includegraphics[width=0.45\textwidth, height=62mm]{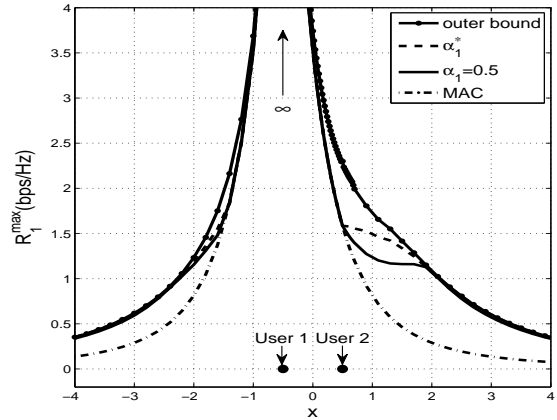}
    \caption{$R_1^{\max}$ with the optimal $\alpha_1^{\star}$ and with $\alpha_1=0.5$ when the destination moves on the x axis (solid horizontal line in Figure $3$).} \label{fig:raout}
    \end{center}
\vspace*{-5mm}
\end{figure}
\noindent
\begin{figure}[!t]
    \begin{center}
    \includegraphics[width=0.45\textwidth, height=62mm]{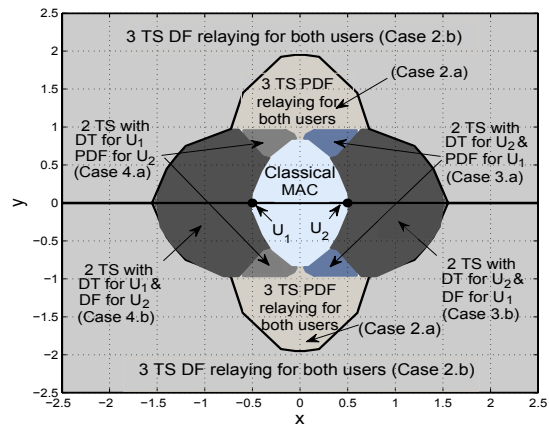}
    \caption{Geometric regions of optimal schemes for maximizing the sum rate per destination location with $\alpha_1=\alpha_2=0.2$, $\gamma=2.4$, ($U_{1}=$ user $1$, $U_{2}=$ user $2$, TS = time slot (phase), DT $=$ direct transmission, DF $=$ decode-forward, PDF $=$ partial DF). The different cases correspond to those in Theorem 3.} \label{fig:locsum}
    \end{center}
\vspace*{-6mm}
\end{figure}
Figure \ref{fig:raout} presents $R_1^{\max}$ versus distance as the destination moves along the line passing through both users for $\alpha_1=0.5$ and  for the optimal $\alpha_1^{\star}$, and compares this rate with the classical MAC and the outer bound.  Results show that as ratio $d_{10}/d_{12}$  increases, $R_1^{\max}$ becomes closer to the outer bound. This phenomenon is expected because when $d_{10}/d_{12}$ increases, $g_{12}/g_{10}$ increases such that $g_{12}^2\rightarrow g_{10}^2+g_{12}^2$. The two users then virtually become one entity and the channel approaches a single-user one with known capacity.

Figure \ref{fig:locsum} shows the regions of optimal schemes for maximizing  the sum rate at each destination location for $\alpha_1=\alpha_2=0.2$ and
the same channel configuration as in Figure \ref{fig:ratefh}. As the figure is symmetric, lets consider the right half plane. There are $5$
different regions that correspond to the first $5$ cases described in Theorem $3$. Results show that the optimal scheme is classical MAC if $d_{10}<d_{12}$ and $d_{20}<d_{12}$, $3$-phase scheme (either DF or PDF) if $d_{10}>d_{12}$ and $d_{20}>d_{12}$, and $2$-phase scheme (either DF or PDF) if $d_{10}>d_{12}$ and $d_{20}<d_{12}$. Furthermore, the optimal scheme switches from DF to PDF as the difference between $d_{10}$ and $d_{20}$ decreases. Similar results can be obtained with the optimal $\alpha_1^{\star}$ and $\alpha_2^{\star}$.

Considering the inverse relation between distance and channel gain in the pathloss model, Figures \ref{fig:ratefh} and \ref{fig:locsum} imply that
as the inter-user link qualities increase in relation to the user-destination link qualities,
the optimal scheme transverses
%for individual or sum rate moves
from no cooperation to partial then to full cooperation.     %
\noindent
\begin{figure}[!t]
    \begin{center}
    \includegraphics[width=0.45\textwidth, height=62mm]{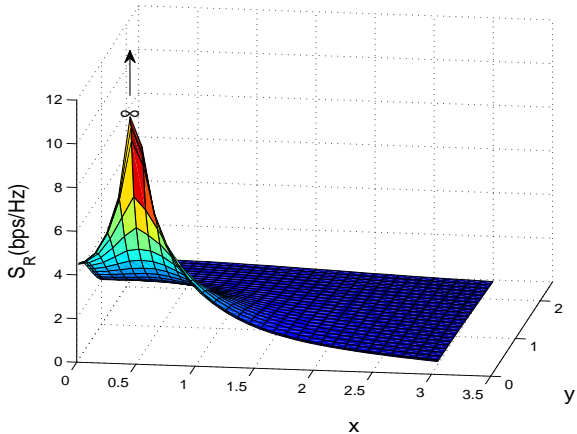}
    \caption{Maximum sum rate per destination location  with $\alpha_1=\alpha_2=0.2$ and $\gamma=2.4$.} \label{fig:ratsum}
    \end{center}
\vspace*{-6mm}
\end{figure}
\noindent
\begin{figure}[!t]
    \begin{center}
    \includegraphics[width=90mm]{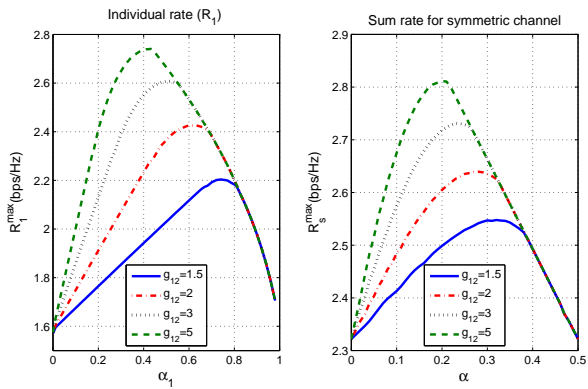}%{fig2Tcom.eps}%{FIG3JOU2.eps}
    \caption{Maximum individual and sum rates versus $\alpha$ for symmetric channels with different values of $g_{12}$.}
    \label{fig:conf}
    \end{center}
\vspace*{-6mm}
\end{figure}
Figure \ref{fig:ratsum} shows the maximum sum rate versus distance when the destination moves in the first quadrant of Figure \ref{fig:locsum}
(other quadrants are symmetric to this one).
 Results show that as the destination moves closer to one of the users, the sum rate increases
 because the link quality between that user and the destination becomes high such that the user can send a large
 amount of information (as $d_{10}$ or $d_{20} \rightarrow 0$, $S_R^{\max}\rightarrow \infty$ bps/Hz).

Figure \ref{fig:conf} shows the maximum individual and sum rates versus $\alpha$ for symmetric channels. Results show that for each inter-user link quality, there is a unique phase duration that maximizes the individual or sum rate. Figure \ref{fig:phstar} shows the interpolated and exact $\alpha^{\star}$ for both the individual and the sum rates versus $g_{12}$ for symmetric channels.
Results show that $\alpha^{\star}$ decreases as $g_{12}$ increases, i.e.
the two users can exchange their information in a smaller portion of time and spend a bigger portion in cooperation. For the interpolated individual rate, the interval of $\alpha_1$ is quantized uniformly into $8$ points including $0$ and $1$ at which the rate is the same as direct transmission. Similarly for the sum rate where $\alpha_1=\alpha_2=\alpha$ since the channel is symmetric. Results show that the interpolated values are close to the optimal values with error less than $6\%$ for the individual rate and $10\%$ for the sum rate.

Figure \ref{fig:ratsumlin} compares the sum rate  of the MAC-TC with the classical MAC for symmetric channels. Results show that once $g_{12}>g_{10}$, the sum rate of the
MAC-TC starts increasing significantly and then reaches saturation. From (\ref{gain2}), the maximum sum rate gains that can be
obtained with $P=2,$ $4,$ and $10$ are $0.85,$ $0.92,$ and $0.95$ bps/Hz, respectively, while the asymptotic gain from Theorem $4$ (with $P\rightarrow \infty$) is
$1$ bps/Hz. Hence, the sum rate gain approaches the asymptotic gain even at low transmission power and finite link gain.
\noindent
\begin{figure}[!t]
    \begin{center}
    \includegraphics[width=0.45\textwidth, height=62mm]{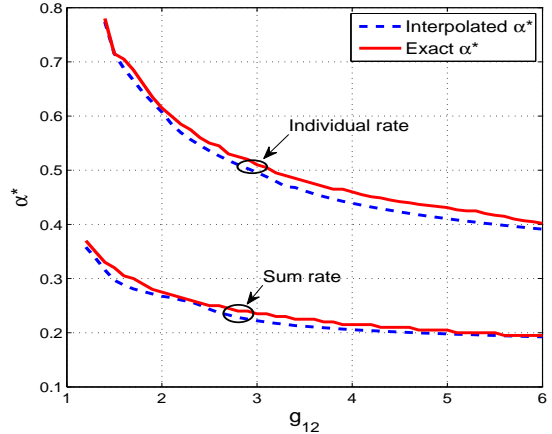}%{fig8finalv.eps}
    \caption{Interpolated and exact optimum $\alpha^{\star}$ versus $g_{12}$ for the individual and sum rates of symmetric channels ($g_{10}=1$).} \label{fig:phstar}
    \end{center}
\vspace*{-6mm}
\end{figure}
\noindent
\begin{figure}[!t]
    \begin{center}
    \includegraphics[width=0.45\textwidth, height=62mm]{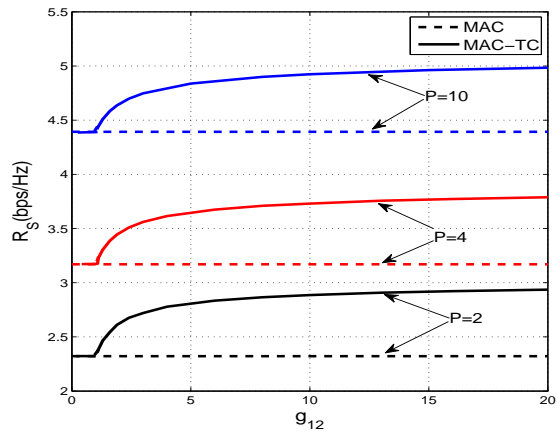}
    \caption{Maximum sum rate with the optimal $\alpha^{\star}$ for the MAC-TC and classical MAC versus $g_{12}$ for symmetric channels with $g_{10}=1$.} \label{fig:ratsumlin}
    \end{center}
\vspace*{-6mm}
\end{figure}
%%%%%%%%%%%%%%%%%%%%%%%%%%%%%%%%%%%%%%%%%%%%%%%%%%%%%%%%%%%%%%%%%%%%%%%%%%%%%%%%%%%%%%%%%%%%%%%%%%%%%%%%%%%%%%%%%
%%%%%%%%%%%%%%%%%%%%%%%%%%%%%%%%%%%%%%%%%%%%%%%%%%%%%%%%%%%%%%%%%%%%%%%%%%%%%%%%%%%%%%%%%%%%%%%%%%%%%%%%%%%%%%%%%
%%\vspace{-2 mm}
\section{Conclusion}\label{sec:concl}
We have proposed a half-duplex cooperative scheme for the MAC-TC and optimized its resource allocation to obtain the maximum individual or sum rate performance. With fixed phase durations, the power allocation problem is convex and can be solved analytically based on the KKT conditions. The optimal phase duration can then be found by numerical search such as a simple grid search, lookup table or quadratic interpolation.
Depending on channel
conditions, the optimal scheme can be direct transmission, DF or PDF and with or without
 message repetition, or multi-hop forwarding. We also present numerical results that illustrate the optimal scheme for each geometrical location of the destination on a plane while fixing user
 locations. These analyses provide  guidelines for implementing
  the proposed scheme in an actual network.

  As future works, it would
of interest to consider the fading channel and study its effects
on the optimal resource allocation for maximizing the average individual and sum rates or minimizing the outage
probability.
%%%%%%%%%%%%%%%%%%%%%%%%%%%%%%%%%%%%%%%%%%%%%%%%%%%%%%%%%%%%%%%%%%%%%%%%%%%%%%%%%%%%%%%%%%%%%%%%%%%%%%%%%%%%%%%%%
%%%%%%%%%%%%%%%%%%%%%%%%%%%%%%%%%%%%%%%%%%%%%%%%%%%%%%%%%%%%%%%%%%%%%%%%%%%%%%%%%%%%%%%%%%%%%%%%%%%%%%%%%%%%%%%%%%%%
\appendices
%\vspace{-2 mm}
\section{Proof of Corollary $1$}
%\vspace{-3 mm}
The PDF scheme is similar to the proposed scheme in Section \ref{sec:copshrr}, but each user transmits both message parts in the first two phases and decodes only one. Therefore, both users construct their transmitted signals as in (\ref{sigtr}) but with $X_{11}$ and $X_{22}$ as
\begin{align*}
X_{11}&=\sqrt{\rho_{11}}U_{12}(w_{12})+\sqrt{\rho_{10}^{\dag}}U_{10}(w_{10}),\nonumber\\
X_{22}&=\sqrt{\rho_{22}}U_{21}(w_{21})+\sqrt{\rho_{20}^{\dag}}U_{20}(w_{20})
\end{align*}
%\vspace{-1 mm}
\noindent where $U_{10},U_{20}, V_{10},V_{20},U, U_{11},U_{22}$ are i.i.d $\sim N(0,1)$
and the power constraints are
%\vspace{-3 mm}
\noindent
\begin{align}\label{powcsch222}
\alpha_1(\rho_{11}+\rho_{10}^{\dag})+\alpha_3(\rho_{10}+\rho_{13})&=P_1,\nonumber\\
\alpha_2(\rho_{22}+\rho_{20}^{\dag})+\alpha_3(\rho_{20}+\rho_{23})&=P_2.
\end{align}
%\vspace{-1 mm}
The achievable rate for this scheme can be derived using a similar procedure as in \cite{haivu3} and
consists of rate pairs $(R_1,R_2)$ satisfying (\ref{th1Grrsimp}) with the following modifications. In $J_1,$ $J_2,$ $S_1$, replace
%\vspace{-2 mm}
\begin{align}\label{difter}
&C(g_{12}^2\rho_{11})\;\text{by}\;{\cal C}\!\!\left(\frac{g_{12}^2\rho_{11}}{1+g_{12}^2\rho_{10}^{\dag}}\right)
+{\cal C}\!\!\left(g_{10}^2\rho_{10}^{\dag}\right)\;\text{and}\nonumber\\
&C(g_{21}^2\rho_{22})\;\text{by}\;{\cal C}\!\!\left(\frac{g_{21}^2\rho_{22}}{1+g_{21}^2\rho_{20}^{\dag}}\right)
+{\cal C}\!\!\left(g_{20}^2\rho_{20}^{\dag}\right)
\end{align}
\noindent  and in $S_2,$ $S_3,$ and $S_4$, replace
\begin{align*}%\label{difter22}
&g_{10}^2\rho_{11}\;\text{by}\;g_{10}^2(\rho_{11}+\rho_{10}^{\dag}),\;g_{10}^2\rho_{22}\;\text{by}\;g_{20}^2(\rho_{22}+\rho_{20}^{\dag}),\nonumber\\
&g_{12}^2\rho_{11}\;
\text{by}\;\frac{g_{12}^2\rho_{11}}{1+g_{12}^2\rho_{10}^{\dag}},\;\text{and}\;
g_{21}^2\rho_{22}\;\text{by}\;
\frac{g_{21}^2\rho_{22}}{1+g_{21}^2\rho_{20}^{\dag}}.
%\vspace{-2 mm}
\end{align*}
 Lets denote the rate constraints for this scheme as $(J_1^{\dag},J_2^{\dag},...,S_4^{\dag})$.
This new region is equivalent to the region in (\ref{th1Grrsimp}) if the optimal $\rho_{10}^{\dag}$ and $\rho_{20}^{\dag}$ are zero.

First, the sum rates $S_2^{\dag}$, $S_3^{\dag}$ and $S_4^{\dag}$ can be maximized with $\rho_{10}^{\dag}=\rho_{20}^{\dag}=0$. With these values, the second terms in $S_2^{\dag}$ and $S_3^{\dag}$ are maximized while the first terms are not affected because we can allocate the whole power in the $1^{\text{st}}$ and $2^{\text{nd}}$ phases to $\rho_{11}$ and $\rho_{22}$, respectively.

Second, for $J_1^{\dag}$, $J_2^{\dag}$ and $S_1^{\dag}$, the different terms from $J_1$, $J_2$ and $S_1$
are given by the new terms in (\ref{difter}). If $g_{12}>g_{10}$ and $g_{21}>g_{20}$,
these terms are maximized with $\rho_{10}^{\dag}=\rho_{20}^{\dag}=0$. This can be shown as follows. Let $P_1^{(1)}$ be the optimal power transmitted from the first user in the $1^{\text{st}}$ phase. Then, the first new term in (\ref{difter}) can be  maximized using the Lagrangian
\begin{align}\label{T1opt}
&L(\rho_{11},\rho_{10}^{\dag},\lambda)=\alpha_1{\cal C}\!\left((1+g_{12}^2\rho_{10}^{\dag})^{-1}g_{12}^2\rho_{11}\right)\nonumber\\
&+\alpha_1{\cal C}\!\left(g_{10}^2\rho_{10}^{\dag}\right)
+\lambda(P_1^{(1)}-\alpha_1(\rho_{11}+\rho_{10}^{\dag})).
\end{align}
%\vspace{-1 mm}
\noindent Taking derivatives with respect to $\rho_{11}$ and $\rho_{10}^{\dag}$, we obtain
\begin{align*}%\label{T1opt1}
\frac{\partial L}{\partial \rho_{11}}&=\frac{g_{12}^2}{1+g_{12}^2(\rho_{11}+\rho_{10}^{\dag})}-\alpha_1\lambda,\nonumber\\
\frac{\partial L}{\partial \rho_{10}^{\dag}}&=\frac{-g_{12}^4\rho_{11}}{(1+g_{12}^2\rho_{10}^{\dag})(1+g_{12}^2(\rho_{11}+\rho_{10}^{\dag}))}\nonumber\\
&\;\;+\frac{g_{10}^2}{1+g_{10}^2\rho_{10}^{\dag}}-\alpha_1\lambda.
\end{align*}
\noindent Next, obtain $\lambda$ from setting $\frac{\partial L}{\partial \rho_{11}}=0$ and substitute it
into $\frac{\partial L}{\partial \rho_{10}^{\dag}}$ to get
%\vspace{-1 mm}
\begin{align*}
\partial L/\partial
%\rho_{11}^{\dag}}=\frac{\alpha_1}{\frac{1}{g_{10}^2}+\rho_{11}^{\dag}}-\frac{\alpha_1}{\frac{1}{g_{11}^2}+\rho_{11}^{\dag}}.
\rho_{10}^{\dag}=\alpha_1(g_{10}^{-2}+\rho_{10}^{\dag})^{-1}-\alpha_1(g_{11}^{-2}+\rho_{10}^{\dag}).
\end{align*}
%%\vspace{-2 mm}
\noindent Since $g_{12}>g_{10}$, $\frac{\partial L}{\partial \rho_{10}^{\dag}}<0$, this first new term in (\ref{difter}) is decreasing in $\rho_{10}^{\dag}$ and the optimal $\rho_{10}^{\dag}=0$. Similarly, the optimal value  $\rho_{20}^{\dag}=0$.

Third, if $g_{12}<g_{10}$ and $g_{21}<g_{20}$, decoding at any user will limit the rates; hence, we have $\alpha_1^{\star}=\alpha_2^{\star}=0.$ Then, both coding schemes reduces to the classical MAC.

Fourth, if $g_{12}>g_{10}$ and $g_{21}<g_{20}$, decoding at user $1$ will limit the rate; hence, the optimal scheme is obtained by setting $\alpha_2=0$. Then, we can show that the optimal $\rho_{10}^{\dag}=0$ following the same procedure in the first and second steps.
Similarly for  $g_{12}<g_{10}$ and $g_{21}>g_{20}$.
%\vspace{-6 mm}
%%%%%%%%%%%%%%%%%%%%%%%%%%%%%%%%%%%%%%%%%%%%%%%%%%%%%%%%%%%%%%%%%%%%%%%%%%%%%%%%%%%%%%%%%%
\section{Proof of the Algorithm for the Optimal Individual Rate}
%%%%%%%%%%%%%%%%%%%%%%%%%%%%%%%%%%%%%%%%%%%%
In optimization problem (\ref{Eq:bed5}), we first assume that $\rho_{i}^{\star}>0$ for $I\in\{10,12,13\}$. Hence, their constraints are inactive and
$\underline{\mu}^{\star}=0$. Therefore, the Lagrangian function in (\ref{lagin}) becomes
%\vspace{-2 mm}
\begin{align}\label{derinf}
&L(R_1,\underline{\rho},\underline{\lambda})=
R_1 -\lambda_0(R_1-J_1)-\lambda_1(R_1-S_4)\nonumber\\
&\;-\lambda_2\left(\alpha_1 \rho_{11}+(1-\alpha_1)(\rho_{10}+\rho_{13})-P\right).
\end{align}
\noindent Now, we verify the KKT conditions in (\ref{KKTind1})--(\ref{KKTind9}). Starting from (\ref{KKTind1}),  we take the derivatives
of $L(R_1,\underline{\rho},\underline{\lambda})$ in (\ref{derinf}) with respect to all variables as follows.
\begin{align}\label{derin2}
\frac{\partial L}{\partial R_1}&=1-\lambda_0-\lambda_1,\\
\frac{\partial L}{\partial \rho_{13}}&=\frac{(1-\alpha_1)\lambda_1\left(1+\frac{g_{20}}{g_{10}}\sqrt{\frac{\rho_{23}}{\rho_{13}}}\right)}
{\frac{1}{g_{10}^2}+\rho_{10}+\rho_{13}+\frac{g_{20}^2}{g_{10}^2}\rho_{23}+2\frac{g_{20}}{g_{10}}\sqrt{\rho_{13}\rho_{23}}}-(1-\alpha_1)
\lambda_2,\nonumber\\
%\end{align}
%\begin{align}\label{derin2}
\frac{\partial L}{\partial \rho_{11}}&=\frac{\alpha_1\lambda_0}{\frac{1}{g_{12}^2}+\rho_{11}}+\frac{\alpha_1 \lambda_1}{\frac{1}{g_{10}^2}+\rho_{11}}-\alpha_1\lambda_2\nonumber\\
%\end{align}
%\begin{align}\label{derin2}
\frac{\partial L}{\partial \rho_{10}}&=\frac{(1-\alpha_1)\lambda_0}{\frac{1}{g_{10}^2}+\rho_{10}}-(1-\alpha_1)\lambda_2\nonumber\\
&\;\;+\frac{(1-\alpha_1)\lambda_1}
{\frac{1}{g_{10}^2}+\rho_{10}+\rho_{13}+\frac{g_{20}^2}{g_{10}^2}\rho_{23}+2\frac{g_{20}}{g_{10}}\sqrt{\rho_{13}\rho_{23}}}.
\end{align}
\noindent First, by setting $\frac{\partial L}{\partial R_1}=0$ and $\frac{\partial L}{\partial \rho_{13}}=0$, we obtain formulas for $\lambda_0$ and $\lambda_2$.  By substituting these formulas into the third and fourth equations in (\ref{derin2}) and equating them to zero, we get the following equations for $\lambda_1$
\begin{align}\label{derin3}
\lambda_1&=\frac{\xi}{\left(1+\frac{g_{20}}{g_{10}}\sqrt{\frac{\rho_{23}}{\rho_{13}}}\right)\left(\frac{1}{g_{12}^2}+\rho_{11}\right)+\xi-
\xi\frac{g_{12}^{-2}+\rho_{11}}{g_{10}^{-2}+\rho_{11}}},\nonumber\\
\lambda_1&=\frac{\xi}{\xi+\frac{g_{20}}{g_{10}}\sqrt{\frac{\rho_{23}}{\rho_{13}}}
\left(\frac{1}{g_{10}^2}+\rho_{10}\right)}
\end{align}
\noindent where $\xi=\frac{1}{g_{10}^2}+\rho_{10}+\rho_{13}+\frac{g_{20}^2}{g_{10}^2}\rho_{23}+2\frac{g_{20}}{g_{10}}\sqrt{\rho_{13}\rho_{23}}$. By equalizing  these two equations, we obtain $\rho_{10}^{\star}$ as in Table I (Case$2.a$).
Then, by substituting these expressions into the power constraint in (\ref{Eq:bed5}) such that KKT condition (\ref{KKTind2}) is satisfied, we get
\begin{align}\label{derin5}
P&=\alpha_1 \rho_{11}^{\star}+(1-\alpha_1)(\rho_{10}^{\star}+\rho_{13}^{\star})\nonumber\\
&=\alpha_1 (a_2-g_{10}^{-2})+(1-\alpha_1)(\rho_{10}^{\star}+\rho_{13}^{\star}).
\end{align}
\noindent By substituting $\rho_{10}^{\star}$ into (\ref{derin5}), we get $\rho_{11}^{\star}$ in Table I.
Hence, we find $\rho_{11}^{\star}$ in terms of $\rho_{13}^{\star}$. In order to find $\rho_{10}^{\star}$, we use KKT conditions (\ref{KKTind2})--(\ref{KKTind4}). While (\ref{KKTind2}) is the power constraint, KKT conditions (\ref{KKTind3}) and (\ref{KKTind4})
are equivalent to $J_1=S_4$ when both constraints are tight. By using the power constraint and the equality $J_1=S_4$, we obtain $f_2(\rho_{13}^{\star})=0$ in (\ref{def1}). Therefore, for any $\alpha_1$
we can find  $\rho_{13}^{\star}$, $\rho_{10}^{\star},\rho_{11}^{\star}$ and $\rho_{13}^{\star}$ from the above equations if there is a solution
for $f_2(\rho_{13}^{\star})=0$ in (\ref{def1}) that gives $\rho_{10}\geq0$. This corresponds to  Case $2.a$.
%%%%%%%%%%%%%%%%%%%%%%%%%
\subsubsection*{Case $2.b$}
If $f_2(\rho_{13}^{\star})$ in (\ref{def1}) has a solution that gives $\rho_{10}<0$, KKT condition (\ref{KKTind5}) is not satisfied. Therefore, to satisfy KKT conditions (\ref{KKTind5}) and (\ref{KKTind7}), we have $\mu_{10}^{\star}> 0$ and $\rho_{10}^{\star}=0$. Hence, $\rho_{10}^{\star}=0$ is an active constraint. Then,  we consider (\ref{derinf}) with $\rho_{10}^{\star}=0$. By using the power constraint (KKT condition (\ref{KKTind2})) and solving $J_1=S_4$ (KKT conditions (\ref{KKTind3}) and (\ref{KKTind4})), we can find
  $\rho_{11}^{\star}$ and $\rho_{13}^{\star}$ as in Table I (Case $2.b$).
%%%%%%%%%%%%%%%%%%%%%%%%%%%%%%%%%%%%%%%%%%%%
\subsubsection*{Case $3$}
As explained in Section \ref{sec:indivr}, if there is no solution for $f_2(\rho_{13}^{\star})=0$ in (\ref{def1}), this means that $J_1 < S_4$,
i.e. only (\ref{KKTind3}) is tight while (\ref{KKTind4}) is not tight. From KKT condition (\ref{KKTind9}), $\lambda_1^{\star}=0$. Then, the
problem becomes as in (\ref{derinf}) with $\lambda_1^{\star}=0$.
%\noindent
%\begin{align}
%\max&\;\;J_1\\
%\text{s.t.}&\;\;P=\alpha_1(\rho_{10})+(1-\alpha_1)\rho_{11}\nonumber
%\end{align}
 Immediately, $\rho_{13}^{\star}=0$ $(\mu_{13}>0)$ because it doesn't affect $J_1$. From the power constraint, we obtain $\rho_{10}^{\star}=(1-\alpha_1)^{-1}(P_1-\alpha_{13}\rho_{11})$. By substituting this $\rho_{10}^{\star}$ into $J_1$ and equalizing to
 zero the derivative of  $J_1$ with respect to $\rho_{11}$, we obtain
$\rho_{11}^{\star}$ in Case $3.a$ or $3.b$
%(\ref{cnj10}) or (\ref{cnj11})
% by equating to zero the derivative of  $J_1$ with respect to $\rho_{10}$.
%By substituting these optimal parameters in $S_4$ and using $J_1<S_4$, we obtain the range of $g_{12}$ that satisfy this relation as in (\ref{cocons}).
%%%%%%%%%%%%%%%%%%%%%%%%%%%%%%%%%%%%%%%%%%%%%%%%%%%%%%%%%%%%%%%%%%%%%%%%%%%%%%%%%%%%
%%%%%%%%%%%%%%%%%%%%%%%%%%%%%%%%%%%%%%%%%%%%%%%%%%%%%%%%%%%%%%%%%%%%%%%%%%%%%%%%%%%%%%%%%%%%%%%%%%%%%%%%%%%%%%%%%%%%%%%
%\vspace{-2 mm}
\section{Proof of the Algorithm for the Optimal Sum Rate}
\subsection{Case $2$: $g_{12}>g_{10}$ and $g_{21}>g_{20}$}\label{prvsec}
In this case, $S_4<S_2$ and $S_4< S_3$. Therefore, the constraints $S_R-S_2<0$ and $S_R-S_3<0$ in KKT condition (\ref{KKTsum4}) are inactive. Moreover,  we first assume
that $\rho_{i}^{\star}>0$ for $I\in\{10,20,12,21,13,23\}$. Hence, their constraints in KKT condition (\ref{KKTsum5}) are inactive. Therefore, from KKT conditions
 (\ref{KKTsum7}) and (\ref{KKTsum8}), we have $\lambda_1=\lambda_2=0$ and $\mu_i^{\star}=0$. Then, the Lagrangian function in (\ref{lagsum}) becomes
%\vspace{-1 mm}
\begin{align}\label{dersum3}
%&L(\rho_{10},\rho_{20},\rho_{11},\rho_{22},\rho_1,\rho_2,\lambda_1,\lambda_2,\lambda_3)\\
%&=S_1
%+\lambda_1\big(S_4-S_1\big)
%+\lambda_2\left(P_1-\alpha_1 \rho_{10}-\alpha_3(\rho_{11}+\rho_1)\right)+\lambda_3\left(P_2-\alpha_2 \rho_{20}-\alpha_3(\rho_{22}+\rho_2)\right).\nonumber
L(S_R,\underline{\rho},\underline{\lambda})=&S_R-\lambda_0(S_R-S_1)-\lambda_3\big(S_R-S_4\big)\\
&-\lambda_4\left(\alpha_{13} \rho_{11}+\alpha_3(\rho_{10}+\rho_{13})-P_1\right)\nonumber\\
&-\lambda_5\left(\alpha_2 \rho_{22}+\alpha_3(\rho_{20}+\rho_{23})-P_2\right).\nonumber
\end{align}
%\vspace{-3 mm}
\noindent From KKT condition (\ref{KKTsum1}), we take the derivatives over all variables and get
\begin{align}\label{dersum4}
%\frac{\partial L}{\partial \rho_1}=&0\Rightarrow \lambda_2=\frac{\lambda_1g_{10}\left(1+g_{20}\sqrt{\frac{\rho_2}{\rho_1}}\right)}
%{\xi_2}\nonumber\\
\frac{\partial L}{\partial S_R}&=1-\lambda_0-\lambda_3,\\
\frac{\partial L}{\partial \rho_{13}}&=\frac{\alpha_3\lambda_3g_{10}\left(g_{10}+g_{20}\sqrt{\frac{\rho_{23}}{\rho_{13}}}\right)}
{\xi_2}-\alpha_3\lambda_4,\nonumber\\
%\frac{\partial L}{\partial \rho_2}=&0\Rightarrow \lambda_3=\frac{\lambda_1g_{20}\left(g_{20}+g_{10}\sqrt{\frac{\rho_1}{\rho_2}}\right)}
%{\xi_2}\nonumber\\
\frac{\partial L}{\partial \rho_{23}}&=\frac{\alpha_3\lambda_3g_{20}\left(g_{20}+g_{10}\sqrt{\frac{\rho_{13}}{\rho_{23}}}\right)}
{\xi_2}-\alpha_3\lambda_5\nonumber%\nonumber\\
\end{align}
\begin{align}
\frac{\partial L}{\partial \rho_{10}}&=\frac{\alpha_3 g_{10}^2\lambda_0}{1+g_{10}^2\rho_{10}+g_{20}^2\rho_{20}}+
\frac{g_{10}^2\alpha_3\lambda_3}{\xi_2}
-\alpha_3\lambda_4,\nonumber\\
\frac{\partial L}{\partial \rho_{20}}&=\frac{\alpha_3 g_{20}^2\lambda_0}{1+g_{10}^2\rho_{10}+g_{20}^2\rho_{20}}+
\frac{g_{20}^2\alpha_3\lambda_3}{\xi_2}
-\alpha_3\lambda_5\nonumber\\
%\end{align}
%\begin{align}%\label{dersum4}
\frac{\partial L}{\partial \rho_{11}}&=\frac{\alpha_1 g_{12}^2\lambda_0}{1+g_{12}^2\rho_{11}}+
\frac{g_{10}^2\alpha_1\lambda_3}{1+g_{10}^2\rho_{11}}-\alpha_1\lambda_4,\nonumber\\
\frac{\partial L}{\partial \rho_{22}}&=\frac{\alpha_2 g_{21}^2\lambda_0}{1+g_{21}^2\rho_{22}}+
\frac{g_{20}^2\alpha_2\lambda_3}{1+g_{20}^2\rho_{22}}-\alpha_2\lambda_5\nonumber
\end{align}
\noindent where $\xi_2=1+g_{10}^2(\rho_{10}+\rho_{13})+g_{20}^2(\rho_{20}+\rho_{23})+2g_{10}g_{20}\sqrt{\rho_{13}\rho_{23}}$. By setting $\frac{\partial L}{\partial S_R}=0,$ $\frac{\partial L}{\partial \rho_{13}}=0$ and $\frac{\partial L}{\partial \rho_{23}}=0$, we obtain formulas for $\lambda_0,\;\lambda_4$ and $\lambda_5,$ respectively.
By substituting these formulas into the other $4$ equations and setting them to $0$, we get the following equations for $\lambda_3$:
\begin{align}\label{dersum5}
\!\!\!\lambda_3^{(1)}&=\frac{\xi_2}{\xi_2\!+g_{10}g_{20}\sqrt{\rho_{23}\rho_{13}^{-1}}
\left(g_{10}^{-2}\!+\!\rho_{10}\!+g_{20}^2g_{10}^{-2}\rho_{20}\right)},\\
\!\!\!\lambda_3^{(2)}&=\frac{\xi_2}{\xi_2\!+g_{10}g_{20}\sqrt{\rho_{13}\rho_{23}^{-1}}
\left(g_{20}^{-2}\!+g_{10}^2g_{20}^{-2}\rho_{10}\!+\!\rho_{20}\right)},\nonumber\\
%\end{align}
%\begin{align}\label{dersum5}
\!\!\!\lambda_3^{(3)}&=\frac{\xi_2}{g_{10}\left(g_{10}+g_{20}\sqrt{\rho_{23}\rho_{13}^{-1}}\right)\left(g_{12}^{-2}+\rho_{11}\right)
+\xi_2-\xi_2\left(\frac{g_{12}^{-2}+\rho_{11}}{g_{10}^{-2}+\rho_{11}}\right)},\nonumber\\
\!\!\!\lambda_3^{(4)}&=\frac{\xi_2}{g_{20}\left(g_{20}+g_{10}\sqrt{\rho_{13}\rho_{23}^{-1}}\right)\left(g_{21}^{-2}+\rho_{22}\right)
+\xi_2-\xi_2\left(\frac{g_{21}^{-2}+\rho_{22}}{g_{20}^{-2}+\rho_{22}}\right)}\nonumber
\end{align}
\noindent By comparing these equations, we get
\begin{align}\label{dersum6}
\lambda_3^{(1)}&=\lambda_3^{(2)}\Rightarrow\;\rho_{23}=\frac{g_{10}^2}{g_{20}^2}\rho_{13}\\
&\Rightarrow\; \xi_2=1+g_{10}^2\rho_{10}+g_{20}^2\rho_{20}+4g_{10}^2\rho_{13}\nonumber\\
%\end{align}
%\begin{align}
\lambda_3^{(1)}&=\lambda_3^{(3)}\Rightarrow\;a_6=\frac{a_3-a_1}{2a_3-a_1}(2g_{10}^2a_3-G),\nonumber\\
\lambda_3^{(2)}&=\lambda_3^{(4)}\Rightarrow\;a_6=\frac{a_4-a_2}{2a_4-a_2}(2g_{20}^2a_4-G)\nonumber%\frac{B_2}{K_2}(B_2-K_2)(2g_{20}^2+\frac{G}{B_2})
\end{align}
\noindent where $G=4g_{10}^2\rho_{13}$,
%\begin{align}\label{dersum7}
%Q=&1+g_{10}^2\rho_{11}+g_{20}^2\rho_{22},\;G=4g_{10}^2\rho_1
%B_2=&\frac{1}{g_{20}^2}+\rho_{20},
%\end{align}
and $a_6,\;a_1,\;a_2,\;a_3$ and $a_4$ are defined in (\ref{def2}).
By getting two formulas for $G$ from (\ref{dersum6}) and equalizing them, we get  $\rho_{22}^{\star}$ in Table II (Case $2.$a).

Up to this point, we have found the relations between $\rho_{13}$ and $\rho_{23}$, and between $\rho_{11}$ and $\rho_{22}$. To find the relation between $\rho_{11}$ and $a_6$ directly, we use the power constraints in (\ref{Eq:bed5}) to satisfy KKT conditions (\ref{KKTsum2}) and (\ref{KKTsum3}). Then, we get
\begin{align}\label{dersum10}
g_{10}^2\rho_{10}+g_{20}^2\rho_{20}&=
\alpha_3^{-1}\big(g_{10}^2(P_1-\alpha_1\rho_{11})\big)
-g_{10}^2\rho_{13}\nonumber\\
&\;\;+\alpha_3^{-1}\big(g_{20}^2(P_2-\alpha_2\rho_{22})\big)-g_{20}^2\rho_{23}\nonumber\\
\rightarrow a_6-1&=
\alpha_3^{-1}\big(g_{10}^2P_1+g_{20}^2P_2-\alpha_1g_{10}^2\rho_{11}-\alpha_2g_{20}^2\rho_{22}\big)\nonumber\\
&\;\;-0.5G
\end{align}
\noindent where (\ref{dersum10}) follows from $g_{10}^2\rho_{13}=g_{20}^2\rho_{23}$. By substituting $\rho_{11}$, $\rho_{22}$ and $G$ in terms of $a_3$ and $a_4$,
% and after some mathematical manipulations,
 we obtain $f_3(\rho_{11}^{\star})$ in (\ref{def2}).
%With the relation between $B_1$ and $B_2$ given in (\ref{dersum8}), we can solve (\ref{sude1}) for $\rho_{10}$
%numerically.
Hence, for each $a_6$, we find $\rho_{23}^{\star},$ $\rho_{11}^{\star}$ and $\rho_{22}^{\star}$. Then, we find two expressions for $\rho_{13}$
from $G$ and $S_1=S_4$ which resembles KKT condition (\ref{KKTsum4}); subtract them to get $f_2(a_6)$ in (\ref{def2}).
Finally, from the power constraints, we obtain $\rho_{10}^{\star}$ and $\rho_{20}^{\star}$ in Table II (Case $2.a$).
If the solution for $f_2(a_6^{\star})$ in (\ref{def2}) gives positive $a_6^{\star}-1$, the results correspond to Case $2.a$.
%\vspace{-2 mm}
%%%%%%%%%%%%%%%%%%%%%%%%%%%%%%%%%%%%%%%%%%%%%%%%%%%%%%%%%%%%%%%555
%%%%%%%%%%%%%%%%%%%%%%%%%%%%%%%%%%%%%%%%%%%%%%%%%%%%%%%%%%%%%%%%%%%%%%%%%%%%%%%%%%%%%%%%%%%%%%%5
\subsubsection*{Case $2.b$} $a_6^{\star}-1<0$ as in the solution for $f_2(a_6^{\star})$ in (\ref{def2}).

%\vspace{-1 mm}
This case means that KKT condition (\ref{KKTsum5}) is not satisfied. To satisfy  KKT conditions (\ref{KKTind5}) and (\ref{KKTsum6}), we have $\mu_{10}^{\star},\;\mu_{20}^{\star}> 0$ and $\rho_{10}^{\star}=\rho_{20}^{\star}=0$ and they are active constraints. Therefore, we obtain the Lagrangian function
in (\ref{dersum3}) but with $\rho_{10}=\rho_{20}=0$.
First, from KKT conditions (\ref{KKTsum2}) and (\ref{KKTsum3}) (power constraints in (\ref{Eq:bed5})), we obtain the relation between $\rho_{11}^{\star}$ and $\rho_{13}^{\star}$; and between $\rho_{22}^{\star}$ and $\rho_{23}^{\star}$ as in Table II (Case $2.b$). Then, by taking the derivatives, we obtain $\frac{\partial L}{\partial \rho_{13}}$,
$\frac{\partial L}{\partial \rho_{23}}$, $\frac{\partial L}{\partial \rho_{11}}$, and $\frac{\partial L}{\partial \rho_{22}}$ as in (\ref{dersum4})
except with $\xi_2$ replaced by $\bar{\xi}_2=1+(g_{10}\sqrt{\rho_{13}}+g_{20}\sqrt{\rho_{23}})^2.$
%\begin{align*}
%\bar{\xi}_2=1+(g_{10}\sqrt{\rho_1}+g_{20}\sqrt{\rho_2})^2.
%\end{align*}
Following  similar steps to the previous derivation, we get $\lambda_3^{(3)}$ and $\lambda_3^{(4)}$ in (\ref{dersum5}).
 By equalizing these two $\lambda$, we obtain $f_4(\rho_{23}^{\star})$ as in (\ref{def2}). Finally, from KKT condition (\ref{KKTsum4}) $(S_1=S_4)$, we obtain
  $f_5(\rho_{13}^{\star})$ as in (\ref{def2}).
%%%%%%%%%%%%%%%%%%%%%%%%%%%%%%%%%%%%%%%%%%%%%%55
%%%%%%%%%%%%%%%%%%%%%%%%%%%%%%%%
%%%%%%%%%%%%%%%%%%%%%%%%%
\vspace{-3 mm}
\subsection{Case $3$: $g_{12}>g_{10}$ and $g_{21}\leq g_{20}$}
%\vspace{-1 mm}
For this case, we use the Lagrangian function in (\ref{dersum3}) but
with $\alpha_2=0,$ $\rho_{22}=0$.
The derivatives of the Lagrangian function with respect to all variables are similar to those in (\ref{dersum4}) but without $\frac{\partial L}{\partial \rho_{22}}$.
As in Appendix \ref{prvsec}, by substituting $\lambda_4$ and $\lambda_5$ given in the first $2$ equations in (\ref{dersum4}) into the next $3$ equations and
setting them to $0$, we get the same  equations for $\lambda_3$ in (\ref{dersum5}) but without $\lambda_3^{(4)}$. Since the first three equations in (\ref{dersum5}) are equal, by comparing them, we get the
first two equations in (\ref{dersum6}). By using the power constraint (KKT conditions (\ref{KKTsum2}) and (\ref{KKTsum3})), we get the same formula in (\ref{dersum10}) but with $\rho_{22}=0$ and $\alpha_2=0$, then we can obtain the relation between $\rho_{11}^{\star}$ and
$\rho_{10}^{\star}$ as given in Table II (Case $3.a$).

We now have $4g_{10}^2\rho_{13}=G$ and $S_1=S_4$ (KKT condition (\ref{KKTsum4})). By subtracting these equations, we get $f_7(\rho_{10}^{\star})$  in (\ref{def2}). Finally, we obtain $\rho_{20}^{\star}$, $\rho_{13}^{\star}$ and $\rho_{23}^{\star}$  in Table II (Case $3.a$).
 If the solution for $f_7(\rho_{10}^{\star})$ in (\ref{def2}) gives $\rho_{10}^{\star}>0$, we have Case $3.a$.
 %%%%%%%%%%%%%%%%%%%%%%%%%%%%%%%%%%%%%%%%%%%%%%%%%%%%%%%%%%%%%%%%%%%%%%%%%%%%%%%%%%%%%%%%%%%%%%%%%%%
 %%%%%%%%%%%%%%%%%%%%%%%%%%%%%%%%%%%%%%%%%%%%%%%%%%%%%%%%%%%%%%%%%%%%%%%%%%%%%%%%%%%%%%%%%%%%%%%%%%
\subsubsection*{Case $3.b$} $\rho_{10}^{\star}<0$ as in the solution for $f_7(\rho_{10}^{\star})$ in (\ref{def2}).

In this case, our assumption that $\underline{\rho}>0$ is inactive is not correct. In order to satisfy KKT condition (\ref{KKTsum5}), we have $\rho_{10}^{\star}=0$ and  $\mu_{10}>0$. Then, the Lagrangian function in (\ref{dersum3}) is considered with $\rho_{10}=\rho_{22}=0$
and $\alpha_2=0$. From the power constraints (KKT conditions (\ref{KKTsum2}) and (\ref{KKTsum3})), we obtain $\rho_{20}^{\star}$ and $\rho_{11}^{\star}$
 in Table II (Case $3.a$).  Then, by taking the derivatives, we obtain
$\frac{\partial L}{\partial \rho_{13}}$,
$\frac{\partial L}{\partial \rho_{23}}$, $\frac{\partial L}{\partial \rho_{20}}$
%$\frac{\partial L}{\partial \rho_{10}}$, and $\frac{\partial L}{\partial \rho_{20}}$
as in (\ref{dersum4}) but with $\xi_2$ replaced by $\xi_3=1+g_{20}^2\rho_{20}+(g_{10}\sqrt{\rho_{13}}+g_{20}\sqrt{\rho_{23}})^2.$
%\begin{align*}
%\xi_3=1+g_{20}^2\rho_{22}+(g_{10}\sqrt{\rho_1}+g_{20}\sqrt{\rho_2})^2.
%\end{align*}
\noindent Following similar steps to  Case $2.a$, we obtain
$\lambda_1^{(2)}$ and $\lambda_1^{(3)}$ in (\ref{dersum5}) but with $\xi_3$
instead of $\xi_2$.
By setting $\lambda_3^{(2)}=\lambda_3^{(3)}$, we obtain $\rho_{23}^{\star}$ in Table II (Case $3.a$)..
Then, from $S_1=S_4$ (KKT condition (\ref{KKTsum4})),
we obtain $f_8(\rho_{13}^{\star})$ in (\ref{def2}).
% The analysis for this case is similar to case $2.e$.
%%%%%%%%%%%%%%%%%%%%%%%%%%%%%%%%%%%%%%%%%%%%%%5
%\vspace{-8 mm}
%%%%%%%%%%%%%%%%%%%%%%%%%%%%%%%%%%%%%%%%%%%%%%%%%%%%%%%%%%%%%%%%%%%%%%%%%%%%%%%%%%%%%%%%%%%%%%%%%%%%%%%%%%%%%%%%%%%%%%%%%
%%%%%%%%%%%%%%%%%%%%%%%%%%%%%%%%%%%%%%%%%%%%%%%%%%%%%%%%%%%%%%%%%%%%%%%%%%%%%%%%%%%%%%%%%%%%%%%%%%%%%%%%%%%%%%%%%%%%%%%%%
\bibliographystyle{IEEEtran}
%\bibliography{refer2}
\bibliography{references}
%%%%%%%%%%%%%%%%%%%%%%%%%%%%%%%%%%%%%%%%%%%%%%%
\begin{IEEEbiography}[{\includegraphics[width=1in,height=1.25in,clip,keepaspectratio]{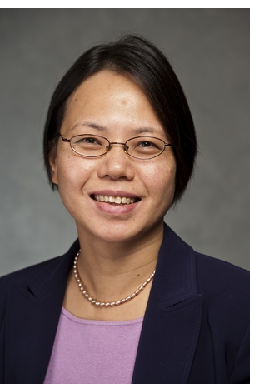}}]{Mai Vu}
%\begin{IEEEbiography}{Mai Vu}
received a PhD degree in Electrical Engineering from Stanford University after having an
MSE degree in Electrical Engineering from the
University of Melbourne and a bachelor degree in
Computer Systems Engineering from RMIT, Australia. Between $2006-2008$, she worked as a lecturer
and researcher at the School of Engineering and
Applied Sciences, Harvard University. During $2009-
2012$, she was an assistant professor in Electrical and
Computer Engineering at McGill University. Since
January $2013$, she has been an associate professor in
the department of Electrical and Computer Engineering at Tufts University.
Dr. Vu conducts research in the general areas of wireless communications,
signal processing for communications, network communications and information theory. Examples include cooperative and cognitive communications,
relay networks, MIMO systems. Dr. Vu has served on the technical program
committee of numerous IEEE conferences and is currently an editor for the
IEEE Transactions on Wireless Communications.
\end{IEEEbiography}
%%%%%%%%%%%%%%%%%%%%%%%%%%%%%%%%%%%%%%%%%%%%%%%%%%%%%%%%%
\begin{IEEEbiography}[{\includegraphics[width=1in,height=1.25in,clip,keepaspectratio]{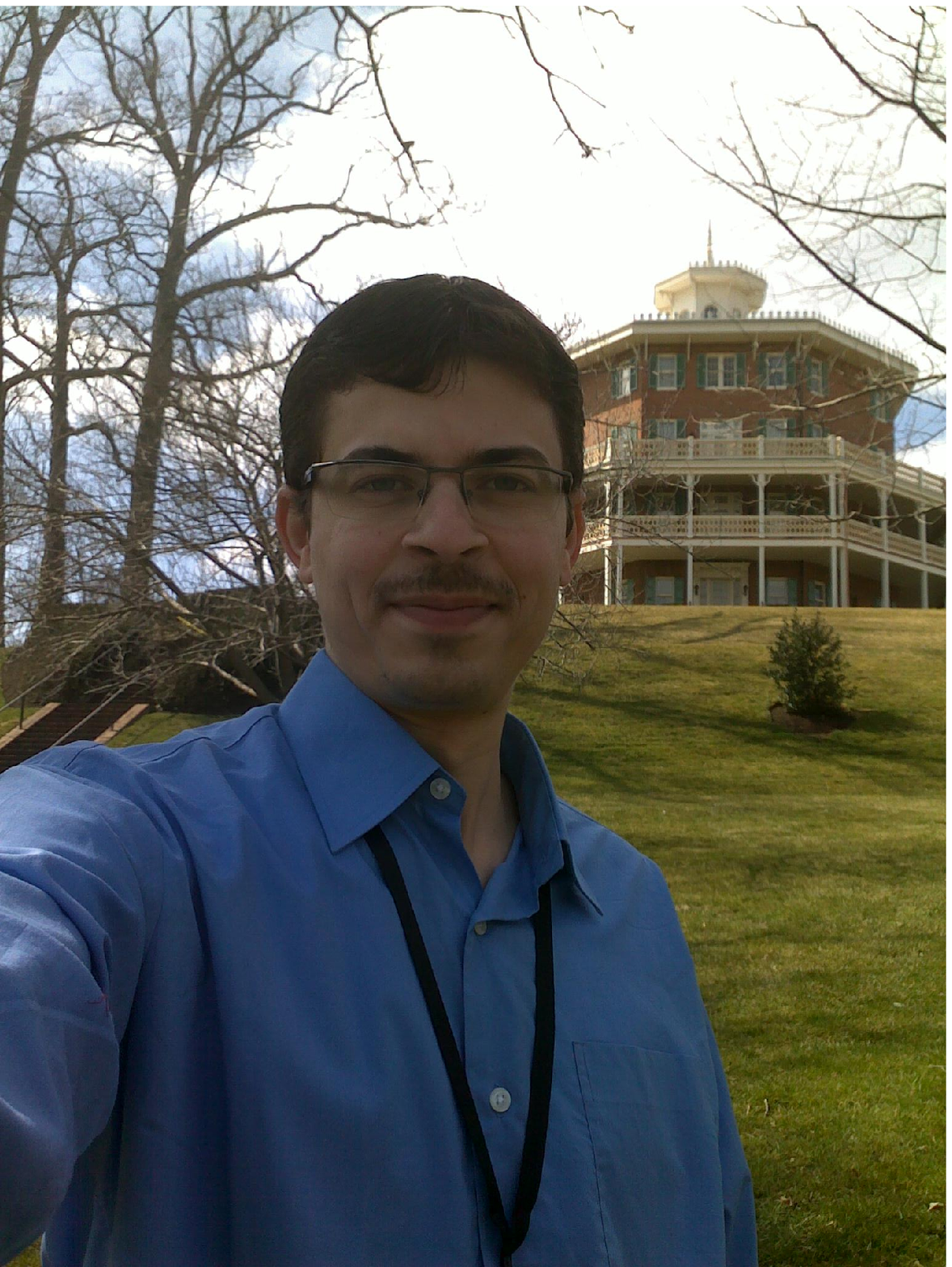}}]{Ahmad Abu Al Haija}
%\begin{IEEEbiography}{Ahmad Abu Al Haija}
received the B.Sc. and M.Sc. degrees in Electrical Engineering from Jordan University of Science and Technology (JUST), Irbid, Jordan, in $2006$ and $2009$, respectively.
He is working toward the Ph.D. degree since January $2010$ at McGill University, Montreal, Canada. Since February $2013,$ he is a visiting student at Tufts University, Medford, MA.
His research interest includes cooperation in multiuser channels, resource allocation for cooperative communication systems, wireless communications and performance analysis evaluation over fading channels.
\end{IEEEbiography}
%%%%%%%%%%%%%%%%%%%%%%%%%%%%%%%%%%%%%%%%%%%%%%55
%\bibliography{referfff}
\end{document}